\newcommand{\cmt}[1]{}
\newtheorem{definition}{Definition}
\newtheorem{theorem}{Theorem}
\newtheorem{lemma}{Lemma}
\theoremstyle{definition}
\DeclareMathOperator*{\argmax}{arg\,max}
\title{On the Oscillations in Cournot Games with Best Response Strategies}
\author{ \href{https://orcid.org/0000-0002-1760-3305}{Zhengyang Liu} \\
	Beijing Institute of Technology\\
	Beijing, China 100081 \\
	\texttt{zhengyang@bit.edu.cn} \\
	\And
	\href{https://orcid.org/0009-0008-0174-7604}{Haolin Lu} \\
	Beijing Institute of Technology\\
	Beijing, China 100081 \\
	\texttt{haolin@bit.edu.cn}\\
        \And
        \href{https://orcid.org/0009-0001-6558-812X}{Liang Shan} \\
	Renmin University of China\\
	Beijing, China 100872 \\
	\texttt{shanliang@ruc.edu.cn} \\
	\And
        \href{https://orcid.org/0000-0001-7752-4687}{Zihe Wang}\thanks{Corresponding author} \\
	Renmin University of China\\
	Beijing, China 100872 \\
	\texttt{wang.zihe@ruc.edu.cn}
}
\begin{document}
\maketitle

\begin{abstract}
In this paper, we consider the dynamic oscillation in the Cournot oligopoly model, which involves multiple firms producing homogeneous products. To explore the oscillation under the updates of best response strategies, we focus on the linear price functions. In this setting, we establish the existence of oscillations. In particular, we show that for the scenario of different costs among firms, the best response converges to either a unique equilibrium or a two-period oscillation. We further characterize the oscillations and propose linear-time algorithms for finding all types of two-period oscillations. To the best of our knowledge, our work is the first step toward fully analyzing the periodic oscillation in the Cournot oligopoly model.
\end{abstract}

% keywords can be removed
\keywords{Cournot Game \and Best Response \and Oscillation}

\section{Introduction}
 
The Cournot competition~\cite{cournot1838researches} is one of the most fundamental economic models in the oligopoly theory. %stands as one of the most fundamental economic models within oligopoly theory. 
In the classic Cournot game, $n$ firms produce homogeneous products. Each firm has market power, as its output affects the market price, and aims to maximize its profits. All the firms are assumed to utilize the production quantities as strategic variables, which they decide simultaneously and independently of each other. The existence of equilibrium has been demonstrated in a broad range of models~\cite{debreu1952social,szidarovszky1977new}, while the question of whether firms can automatically converge to equilibrium through best response strategies remains a challenging problem.

When the firms update their strategies asynchronously, they ultimately reach the Nash equilibrium since the Cournot game can be considered as a potential game~\cite{monderer1996potential}. The scenario becomes more intriguing when the firms {\em synchronize} their strategy updates. That is, each firm chooses the best response strategy, while others keep their strategies (production quantities) unchanged. 
Palander and Theocharis~\cite{palander1939konkurrens,theocharis1960stability} showed that convergence depends on the number of firms in the game.
They demonstrated that equilibrium is attained with two firms, but with three firms, a stationary oscillation ensues. Beyond three firms, significant instability arises. %their actions exhibit significant instability.
Puu~\cite{puu1996complex} conducted numerous simulations of the firms' actions.
This line of research~\cite{mcmanus1961comments,fisher1961stability,chrysanthopoulos2021adaptive} considered the properties of local dynamics, assuming the firms' quantities are always non-negative. Consequently, the dynamics of their strategies can be described by the system of linear equations. 
C{\'a}novas et al.~\cite{canovas2008cournot} pioneered the examination of global dynamics, considering the constraint that supply quantities must be non-negative. Thus, the dynamics of their strategies are delineated by a set of non-linear equations. They identified specific scenarios where periodic orbits of period two exist within the dynamics.
Furthermore, C{\'a}novas et al.~\cite{canovas2009reducing,canovas2022monopoly} delved into circumstances where the number of firms diminishes to a monopoly within the dynamics.
%In the previous work, people all first assume some specific type of the oscillation of period $2$, then give conditions when the oscillation is possible in terms of the firms' costs. 
To the end, all the previous works typically first assume a particular type of oscillation with a period of two, followed by establishing conditions under which such oscillation is feasible based on the costs incurred by the firms.
Then a natural question arises: 
\begin{quote}
%{\em Does there exist an oscillation of a longer period in a general setting with multiple firms?}\\
{\em Does an oscillation of a longer period exist in a general setting with multiple firms?}
\end{quote}

In this work, we address this issue and fully characterize the oscillations in the Cournot oligopoly model.
Our model builds upon the foundation laid by C{\'a}novas et al.~\cite{canovas2008cournot}. 
We examine a Cournot oligopoly model with a population of 
$n$ firms producing homogeneous products. All firms make decisions simultaneously, considering a linear price function. Production costs may vary among firms, and we impose the constraint that production quantities cannot be negative.
The contributions of this paper are summarized as follows:
\begin{itemize}
    \item Under linear inverse demand with non‑negative quantity constraints, we prove that simultaneous best‑response dynamics in an $n$-firm Cournot game must converge exclusively to either a Nash equilibrium or a two‑period cycle.
    \item We further establish that any two-period cycle must conform to one of three specific structural patterns. The first type arises when each firm produces zero output in one of the two periods. The second type occurs when fewer than three firms produce a positive quantity in the first period, with additional firms producing positive quantities in the second period. The third type arises in the case of exactly three firms.
    \item We propose linear-time $O(n)$ algorithms that simultaneously detect oscillation patterns and classify their structural type. This computational contribution equips practitioners with a diagnostic toolkit for empirically analyzing strategic cycling behavior in real-world oligopolistic markets.
\end{itemize}
\begin{comment}
    \begin{itemize}
    \item We characterize an inverse relationship between firms' production quantities and costs after a finite round of best response updates.
    \item We show that the Cournot dynamic system will converge to either a periodic oscillation or a unique equilibrium point.
    \item When all the firms incur identical costs, the dynamic system converges to either the unique equilibrium or a periodic oscillation with a period of $2$.
    \item Furthermore, we establish the impossibility of oscillations with a period of $3$ in the dynamic system.
\end{itemize}
\end{comment}

%non-linear\cite{lampart2012stability,ahmed1998dynamics}

\begin{comment}
Testing Convergence Speed under Different Alpha Values:
Evaluate the convergence speed for various values of alpha.

Relationship between Number of firms and Alpha:
Investigate the correlation between the number of firms and alpha.

#(Theorem 3, 1 firm can converge with $0<\alpha<2$, 2 firms can converge with $0<\alpha<4/3$, 3 firms can converge with $0<\alpha< 1$) 
Convergence of Alpha within the Range of 1 to 2:
Examine whether alpha converges within the interval of 1 to 2.

Convergence when Multiple firms Simultaneously Adjust Strategies:
Determine if convergence occurs when multiple firms adjust strategies simultaneously.    
\end{comment}

\subsection{Related work}
%Game theory \cite{fudenberg1991game} is a mathematical study of decision-makers who influence each other's decisions, capable of modeling complex interactions among participants with conflicting objectives. It has been applied in various fields. The concept of Nash equilibrium, proposed by Nash \cite{nash1950equilibrium}, is the most widely accepted solution concept in game theory. Nash equilibrium is a state in which no player can increase their payoff by changing their strategy, given that the other players maintain their current strategies.

%Potential function
%In the research of Monderer and Shapley \cite{monderer1996potential} at 1996, as well as Voorneveld \cite{voorneveld2000best} at 2000, discussed the concept of potential games, which draws inspiration from the notion of potential functions in physics. The fundamental idea behind potential games is that information regarding Nash equilibria can be mapped to a real-valued function on the strategy space. The potential function of a game can be regarded as an objective function that maps the behavior of a group of players' strategies.  

%Cournot
In his seminal work, Cournot~\cite{cournot1838researches} introduced a mathematical model to characterize the features of duopoly markets. Since then, there has been considerable interest in whether firms could reach equilibrium through best response strategies, leading to numerous studies on equilibrium in the Cournot oligopoly model.
Based on Cournot's model, Palander~\cite{palander1939konkurrens} discovered that convergence fails when the number of competitors exceeds three. When there are three competitors, the Cournot equilibrium tends towards endless but stationary oscillation. 
Theocharis~\cite{theocharis1960stability} independently studied the Cournot model and obtained results similar to Palander's. 
Numerous of studies~\cite{puu1996complex,puu1998chaotic,agiza1998explicit,ahmed1998dynamics} investigated the impact of increasing the number of firms in the Cournot model on equilibrium outcomes. C{\'a}novas et al.~\cite{canovas2009reducing,canovas2022monopoly} explored conditions leading to the reduction in the number of competitors in the Cournot oligopoly model to duopoly or monopoly. Our work extends this line by thoroughly examining the periodic oscillations in Cournot competition with multiple firms, addressing the conditions under which these oscillations occur and characterizing their nature.

%adaptive rule 
The significant influence of adjustment rules on the convergence to equilibrium within the Cournot oligopoly model has been extensively studied.
Fisher~\cite{fisher1961stability} highlighted the pivotal role of adjustment speed in facilitating convergence in Cournot games involving multiple firms. Building upon the aforementioned insights, Nowaihi and Levine~\cite{al1985stability} explored scenarios involving both discrete and continuous adjustments. Okuguchi~\cite{okuguchi1970adaptive} delved into models where firms employ different adjustment rules, assuming that the $i$-th firm's expectation of the $k$-th firm's output is not necessarily the same as the $j$-th firm's. 
Matsumoto \cite{matsumoto2014theocharis} investigated production differentiation oligopoly, focusing on adjustments in pricing and quantities.
Furthermore, subsequent studies have conducted experiments examining adaptation rules within the Cournot game~\cite{huck2002stability,huck1999learning,rassenti2000adaptation,cox1998learning}. The best response strategy is not only widely applied in the field of Cournot games, but also extensively studied in various other related areas, such as lottery contests~\cite{ghosh2023best}, continuous zero-sum games~\cite{hofbauer2006best} and additive aggregation finite games~\cite{kukushkin2004best}. Our contribution builds upon these studies by providing a detailed classification of oscillation types and introducing linear-time algorithms to identify these oscillations, thereby offering new insights into the dynamic behavior of firms in Cournot competition.

%Many scholars to study the equilibrium in Cournot model by analysis the marginal costs. %Marginal cost
%McManus \cite{mcmanus1961comments} shows the equilibrium can be attained if the marginal costs exhibit increasing trends. Hahn \cite{hahn1962stability} considered continuous differentiable demand and cost functions under stability conditions.

%bounded rationality
In another research line, Ahmed et al.~\cite{ahmed2000modifications} began to relax the assumption in Puu's work~\cite{puu1996complex} that each firm knows the output of its competitors and is aware of the market demand function. These studies focused on bounded rationality, where firms lack complete information about the market demand function. Bischi and Naimzada~\cite{bischi2000global} presented a dynamic Cournot model with bounded rationality. Following this, a series of works by Agiza~\cite{agiza2001dynamics,agiza2002complex,agiza2003nonlinear,agiza2004chaotic} explored the bounded rationality with firms' strategy updates by using the marginal profit approach. 
The complex dynamics, bifurcations, and chaos are observed in the numerical simulations~\cite{zhang2007analysis,fan2012complex}.
Even-Dar~\cite{even2009convergence} showed that using gradient-based no-external regret procedures can guarantee convergence to equilibrium.
Our research further differentiates itself by examining the constraints on non-negative production quantities.

There are also discussions regarding how the firms' adjustments, whether simultaneous or sequential, affect equilibrium outcomes. Tullock~\cite{tullock2001efficient} constructed a rent-seeking model and proposed the possibility of an ``Intellectual mire'' situation that could lead to the non-existence of equilibrium. Leininger~\cite{leininger1993more} later addressed the issues raised by Tullock using sequential play methods. Subsequently, a series of articles studied the dynamics of the first or second move in the duopoly model~\cite{gal1985first,reinganum1985two}. Building upon this foundation, these cases where more than two firms sequentially take action were further investigated~\cite{puu1996complex,puu1998chaotic}. Our work specifically focuses on simultaneous updates in a Cournot setting with multiple firms, exploring the resulting oscillations.

\section{Preliminaries}
We consider a Cournot oligopoly model, where a population of $n$ firms in the market produces homogeneous products. All firms make simultaneous decisions about production quantities in each round. At round $t$, we denote by $q_i^t$ the output decision of firm $i$, and the market price $p^t$ is determined by the inverse demand function. In this work, we consider the {\em linear} price function as following~\cite{theocharis1960stability,bischi2000global,agiza2004chaotic,ding2014analysis,fiat2019beyond}:
\begin{align*}
    p^t := A  - \sum_{i = 1}^n q_i^t ,
\end{align*}
where $A$ is the market capacity which is a positive constant.
%and $r \in (0,1]$ denotes the differentiation factor of products produced by different firms. We set the differentiation factor $r$ to be non-zero because only when $r$ is not equal to 0, the products produced by different firms can be correlated, making the Cournot oligopoly model applicable.

Suppose each firm can have different production technology, firm $i$ producing a unit item with the cost $c_i \ge 0$.
Therefore, the production cost for firm $i$ with producing $q_i$ items is $c_i q_i$.

The utility for firm $i$ at round $t$ with production quantity $q_i^t$ is: 
\begin{align*}
    u_i^t = (A - \sum_{j = 1}^n q_j^t - c_i) q_i^t.
\end{align*}

Assuming the update rules for all firms use the {\em best response} strategy, for $t\ge 1$, we have 
\begin{align}\label{BR_1}
    q_i^{t+1} = \argmax_{q_i \ge 0} \left\{q_i(A - \sum_{j \not =i}q_{j}^{t} -q_i -c_i) \right\}.
\end{align}

\begin{comment}
    \begin{align}\label{BR_1}
    q_i^{t+1} = \alpha \Big( \frac{A - \sum_{j \not =i}q_{j}^{t} -c_i}{2} \Big) +(1-\alpha)q_i^{t}.
\end{align}
\begin{align}\label{BR_1}
    q_i^{t+1} = \frac{A - \sum_{j \not =i}q_{j}^{t} -c_i}{2} 
\end{align}
\end{comment}

%The $\alpha \in [0,1]$ value in the equation above is the adjustment rate for firms being the best response in one time period. The case of $\alpha = 0$ represents all firms that do not change the strategy at all, therefore, in this paper, we only consider the case of $ \alpha\in (0,1]$. 
The R.H.S. of Eq.~\eqref{BR_1} is a singleton set since the objective is quadratic w.r.t. $q_i$, and we abuse the equality sign here.
Consider the production quantity strategy $q_i^{t+1}$ at round $t+1$ is maximized at $q_i^{t+1} = \frac{A - \sum_{j \not =i}q_{j}^{t} -c_i}{2}$, may lead to a negative quantity. However, a firm cannot produce a negative number of products. Hence, we further limit the lower bound of the production quantity~\cite{canovas2008cournot,canovas2022monopoly}. That is, for any $i\in[n]$,
\begin{comment}
    \begin{align}\label{q_i^{t+1}}
    q_i^{t+1} = \max\left\{ 0, \alpha \Big( \frac{A - \sum_{j \not =i}q_{j}^{t} -c_i}{2} \Big) +(1-\alpha)q_i^{t} \right\} ,  \forall i.
\end{align}
\end{comment}
\begin{align}\label{q_i^{t+1}}
    q_i^{t+1} := \max\left\{ 0,  \frac{A - \sum_{j \not =i}q_{j}^{t} -c_i}{2}  \right\}.
\end{align}
Next, we give the definition of the Nash equilibrium in our setting.
\begin{definition}[Nash equilibrium]
     %If the production quantity vector in round $t$ equals the production quantity vector in round $t+1$, then we call it a Nash equilibrium. This can be expressed as:
Given $\bm{q}^t = (q_1^t, q_2^t, \ldots, q_n^t)$ as a production quantity vector, we say it is a Nash equilibrium if it is unchanged under the update rule as Eq.~\eqref{q_i^{t+1}}. That is,
     \begin{align*}
         (q_1^{t+1}, q_2^{t+1}, \ldots, q_n^{t+1}) = (q_1^t, q_2^t, \ldots, q_n^t).
     \end{align*}
\end{definition}

%\cite{ask}
\section{Periodic Oscillation}
In this section, we start to analyze the periodic oscillations. W.l.o.g., we assume that the firms are sorted according to the cost such that $0 \le c_1\le c_2\le\cdots\le c_n$. We first show that firms' production quantities are in an inverse relationship with the costs after a finite round of best response.   

%the relationship between firms' product quantities and their cost. 
%For any time $t$, $q_i^t$ represents the firm $i$'s quantity in time $t$. 
\begin{lemma}\label{lemma1}
For any $i>j$ and any $t$, we let $Q^{t}=q_i^{t}-q_j^{t} $   and  $\Delta c=c_i-c_j\geq 0$. There exists an integer $T>0$ such that      
    $Q^{t}\leq0$ for any $t>T$. Moreover, if $\Delta c=0$, then  
 $\lim_{t\rightarrow \infty}Q^t=0$.
\end{lemma}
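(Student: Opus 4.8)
The plan is to track the scalar difference $Q^t = q_i^t - q_j^t$ directly, viewing one round of the update as a one-dimensional affine map composed with the projection $x \mapsto \max\{0,x\}$. First I would write the unconstrained best responses $\hat q_\ell^{t+1} := \tfrac{1}{2}\big(A - \sum_{k \ne \ell} q_k^t - c_\ell\big)$, so that $q_\ell^{t+1} = \max\{0, \hat q_\ell^{t+1}\}$. The crucial observation is that the common aggregate $\sum_k q_k^t$ cancels in the difference of the two unconstrained responses, giving the clean identity
\[ \hat q_i^{t+1} - \hat q_j^{t+1} = \frac{Q^t - \Delta c}{2}. \]
Everything else follows from feeding this identity through the monotone, $1$-Lipschitz projection.

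Next I would prove two properties of the genuine (projected) difference $Q^{t+1} = \max\{0,\hat q_i^{t+1}\} - \max\{0,\hat q_j^{t+1}\}$. The first is invariance: if $Q^t \le 0$, then since $\Delta c \ge 0$ the identity gives $\hat q_i^{t+1} \le \hat q_j^{t+1}$, and monotonicity of the projection yields $Q^{t+1} \le 0$; thus the set $\{Q \le 0\}$ is absorbing. The second is a contraction-with-drift estimate: whenever $Q^{t+1} > 0$, one has $Q^{t+1} \le \tfrac{1}{2}(Q^t - \Delta c)$. This is checked by cases on which projections are active. If both $i$ and $j$ are active it is the identity itself; the only other way to have $Q^{t+1} > 0$ is that $j$ is clamped ($\hat q_j^{t+1} \le 0$) while $i$ is active, and then the clamping inequality $A - \sum_{k \ne j} q_k^t - c_j \le 0$ bounds $\hat q_i^{t+1} = Q^{t+1}$ from above by exactly $\tfrac{1}{2}(Q^t - \Delta c)$. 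In particular $Q^{t+1} > 0$ forces $Q^t > \Delta c$ and $Q^{t+1} \le Q^t/2$.

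With these two facts the two conclusions drop out. For $\Delta c > 0$: as long as $Q$ stays positive it at least halves each round, so it must eventually fall into $(0, \Delta c]$; at the first such round the drift estimate forces the next value to be $\le 0$, after which the absorbing property keeps $Q^t \le 0$ for all larger $t$, yielding the desired finite $T$. For $\Delta c = 0$ the identity becomes $\hat q_i^{t+1} - \hat q_j^{t+1} = Q^t/2$, and the $1$-Lipschitz property of the projection gives $|Q^{t+1}| \le |\hat q_i^{t+1} - \hat q_j^{t+1}| = |Q^t|/2$, so $|Q^t| \to 0$ and hence $Q^t \to 0$. I expect the main obstacle to be the boundary case of the contraction estimate, namely the situation where exactly one of the two firms is clamped at zero, since there the simple algebraic identity no longer applies and one must instead exploit the clamping condition to recover the same bound; once this single inequality is secured, both the absorbing property and the geometric decay are routine.
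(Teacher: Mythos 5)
Your proof is correct and takes essentially the same approach as the paper: both arguments reduce the dynamics to the scalar sandwich $\min\{0,\tfrac12(Q^t-\Delta c)\}\le Q^{t+1}\le \max\{0,\tfrac12(Q^t-\Delta c)\}$ (you obtain it by case analysis on which projections are clamped, the paper via the inequality $\min\{0,y\}\le \max\{0,x+y\}-\max\{0,x\}\le \max\{0,y\}$), and then run the same absorbing-set-plus-halving argument, including the identical contraction $|Q^{t+1}|\le \tfrac12|Q^t|$ for the case $\Delta c=0$. No gaps.
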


\begin{comment}

zihe: the current proof works for $\alpha=1$, the proof should extends to all $\alpha$. We can add a figure for the illustration of this lemma.

Temporary Notation table:

$t=1,2...:$ discrete time periods.

$A$: Max market price.

$r$: differentiation factor.

$q_i^t: $ output quantity of firm $i$ at time $t$.

$p_i^t = A - q_i^t - r \sum_{j\not = i} q_j^t:$ Linear price function.

$c_i:$ product cost per unit production.

$u_i^t = (A - q_i^t - r \sum_{j\not = i} q_j^t - c_i) q_i^t:$ Utility for firm $i$.

$q_i^{t+1} = \max\left\{ 0, \alpha \Big( \frac{A - r\sum_{i \not =j}q_{j}^{t} -c_i}{2} \Big) +(1-\alpha)q_i^{t} \right\}$: Best response strategy.

    Suppose $\alpha=1$, we have:
\begin{itemize}
    \item if $q_i-q_j\ge c_i-c_j$, we have
\begin{equation*}
    0\le q_i-q_j\le \frac{q_i-q_j-(c_i-c_j)}{2}.
\end{equation*}
\item if $q_i-q_j< c_i-c_j$, we have
\begin{equation*}
    \frac{q_i-q_j-(c_i-c_j)}{2} \le q_i-q_j\le 0.
\end{equation*}
\end{itemize} 
\end{comment}

\begin{proof}

It follows from Eq.~\eqref{q_i^{t+1}} that
\begin{align*}
Q^{t+1}%=& q_i^{t+1}-q_j^{t+1} \\
= &\max\left\{ 0, \frac{A - \sum_{k \not =i}q_{k}^{t} -c_i}{2} \right\}  -\max\left\{ 0,  \frac{A - \sum_{k \not =j}q_{k}^{t} -c_j}{2}\right\} \\
= &\max\left\{ 0, \frac{A - \sum_{k \not =j}q_{k}^{t} -c_j }{2} + \frac{1 }{2}  Q^t
-\frac{ 1 }{2}\Delta c\right\} -\max\left\{ 0, \frac{A - \sum_{k \not =j}q_{k}^{t} -c_j}{2}  \right\} .
\end{align*}
Noting $\max\{0,x\}=\frac{1}{2}(x+|x|)$ and $\min\{0,x\}=\frac{1}{2}(x-|x|)$, we obtain
\begin{align*}
    \min\{0,y\}\leq\max\{0,x+y\}-\max\{0,x\}\leq \max\{0,y\}.
\end{align*}
Then we conclude 
\begin{align}\label{Q-q}
\min\left\{ 0, f(Q^t) \right\}\leq Q^{t+1}\leq  \max\left\{ 0, f(Q^t)\right\}, 
\end{align} 
where $f(Q^t)=\frac{1 }{2}  Q^t -\frac{ 1 }{2}\Delta c$. 
In Fig.~\ref{lemma1_fig}, we demonstrate the relationship between $Q^t$ and $Q^{t+1}$.
Since $\Delta c\geq 0$, the shadowed part of the figure is the feasible domain for $(Q^t, Q^{t+1})$.

\begin{figure}[t]
    \centering

\tikzset{every picture/.style={line width=0.75pt}} %set default line width to 0.75pt        

\begin{tikzpicture}[x=0.75pt,y=0.75pt,yscale=-1,xscale=1]
%uncomment if require: \path (0,300); %set diagram left start at 0, and has height of 300

%Straight Lines [id:da49194020224261115] 
\draw    (89,200.28) -- (140.69,200.4) -- (377.1,200.95) ;
\draw [shift={(379.1,200.95)}, rotate = 180.13] [color={rgb, 255:red, 0; green, 0; blue, 0 }  ][line width=0.75]    (10.93,-3.29) .. controls (6.95,-1.4) and (3.31,-0.3) .. (0,0) .. controls (3.31,0.3) and (6.95,1.4) .. (10.93,3.29)   ;
%Straight Lines [id:da36170928340854647] 
\draw    (179.56,260.15) -- (180.56,107.81) ;
\draw [shift={(180.57,105.81)}, rotate = 90.38] [color={rgb, 255:red, 0; green, 0; blue, 0 }  ][line width=0.75]    (10.93,-3.29) .. controls (6.95,-1.4) and (3.31,-0.3) .. (0,0) .. controls (3.31,0.3) and (6.95,1.4) .. (10.93,3.29)   ;
%Straight Lines [id:da7018609237561064] 
\draw    (220.31,198.63) -- (220.31,200.85) ;
%Straight Lines [id:da7774099168897797] 
\draw    (98.67,260.28) -- (350.57,137.05) ;
%Shape: Polygon [id:ds28577271061602993] 
\draw  [draw opacity=0][fill={rgb, 255:red, 0; green, 0; blue, 0 }  ,fill opacity=0.4 ] (356.48,134.36) -- (356.67,201.27) -- (222.29,200) -- cycle ;
%Shape: Polygon [id:ds3490868748782243] 
\draw  [draw opacity=0][fill={rgb, 255:red, 0; green, 0; blue, 0 }  ,fill opacity=0.4 ] (104.96,256.57) -- (104.56,200.57) -- (220.31,200.85) -- cycle ;
%Straight Lines [id:da6644434841059306] 
\draw    (296,123) -- (312.37,152.78) ;
\draw [shift={(313.33,154.53)}, rotate = 241.2] [color={rgb, 255:red, 0; green, 0; blue, 0 }  ][line width=0.75]    (10.93,-3.29) .. controls (6.95,-1.4) and (3.31,-0.3) .. (0,0) .. controls (3.31,0.3) and (6.95,1.4) .. (10.93,3.29)   ;

% Text Node
\draw (276.04,91.49) node [anchor=north west][inner sep=0.75pt]  [font=\small] [align=left] {$\displaystyle f\left( Q^{t}\right) =\frac{1}{2} Q^{t} -\frac{1}{2} \Delta c$};
% Text Node
\draw (166.37,201.02) node [anchor=north west][inner sep=0.75pt]   [align=left] {0};
% Text Node
\draw (213.46,205.02) node [anchor=north west][inner sep=0.75pt]  [font=\scriptsize] [align=left] {$\displaystyle \Delta c$};
% Text Node
\draw (145.94,106.31) node [anchor=north west][inner sep=0.75pt]   [align=left] {$\displaystyle Q^{t+1}$};
% Text Node
\draw (368.41,203.05) node [anchor=north west][inner sep=0.75pt]   [align=left] {$\displaystyle Q^{t}$};

\end{tikzpicture}

\caption{The shadowed area is the feasible domain for values of $(Q^t,Q^{t+1})$.}
    \label{lemma1_fig}
\end{figure}
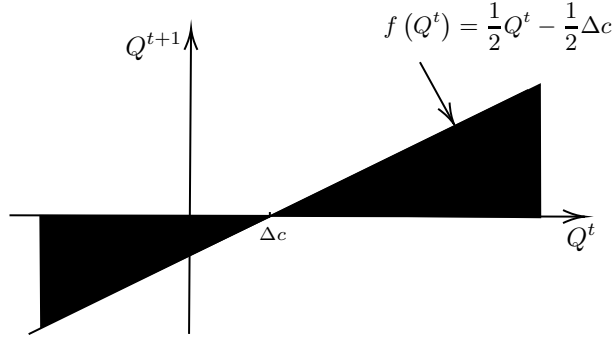

For the  case %$Q^t$ such that  $\left(1- \frac{\alpha }{2} \right) Q^t-\frac{ \alpha }{2}\Delta c<  0$, i.e.,  
$  Q^t \leq \Delta c$,   it follows from Eq.~\eqref{Q-q} that $Q^{t+1}\leq 0  $, and then $Q^{t+k}\leq 0  $ for any integer $k\geq 1$ by induction. 
Otherwise,  that is, $  Q^t > \Delta c$, it follows from Eq.~\eqref{Q-q} that 
\begin{align*}
0\leq Q^{t+1}\leq  \frac{1 }{2} Q^t -\frac{ 1 }{2}\Delta c.
\end{align*}
We can always find a finite $t_0$ such that  $0\leq Q^{t+t_0}
\leq  \Delta c$. Then from the above case, we get
$ Q^{{t+t_0}+k}\leq 0   $  for any integer $k\geq 1$. 
To sum up,  there exists a positive integer $T$ such that $Q^t \leq  0$ for any $t>T$.

When $\Delta c=0$,  it follows from Eq.~\eqref{Q-q} that 
\begin{align*}
\min\left\{ 0, \frac{1 }{2} Q^t
 \right\}&\leq Q^{t+1}\leq  \max\left\{ 0,  \frac{1 }{2} Q^t \right\}
 \Rightarrow |Q^{t+1}| \le  \frac{1}{2}|Q^t|.
\end{align*}
It holds that $\lim_{t\rightarrow \infty}Q^t=0$.  
%and thus $Q\leq 0$. This is a contradiction.   
%To sum up, combining the result of scenario $\Delta c = 0$ and $\Delta c >0$, then there exists $T$ such that $Q^t \leq  0$ for any $t>T$. That ends the proof. 
\begin{comment}
***********************************

Define a sequence of numbers  $\{Q^{t}\}_{t=1}^{\infty}$ with 
\begin{equation}\label{Q_t}
Q^{t+1}= \left(1- \frac{\alpha }{2} \right) Q^t
-\frac{ \alpha }{2}\Delta c \end{equation}  
We \textit{claim} that 
\begin{equation}\label{Q_2}
     \lim_{t\rightarrow \infty}Q^{t}= -\Delta c.
\end{equation}

%It follows from $r\in (0,1]$ that $\frac{1}{r-2}\Delta c\leq 0$.  Therefore   there exists $T$ such that for any $t>T$, we have     $Q^{t}=q_i^{t}-q_j^{t}\leq 0$,  when $\Delta c=c_i-c_j\geq 0$. *****************Then the lemma follows if $\alpha=1$.

We now prove the claim \eqref{Q_2}.

If $\Delta c=0$, then $Q^{t+1}= \left(1+\alpha \left(\frac{r}{2}-1\right) \right)Q^{t}$. It is obvious that $ \lim_{t\rightarrow \infty}Q^{t}= 0$, i.e., \eqref{Q_2} follows.

%For the case  $ Q^1\geq  0$,    it is obvious that $\{Q^{t}\}_{t=1}^{\infty}$   is a decreasing number sequence with a lower bound $0$, and thus   we conclude $ \lim_{t\rightarrow \infty}Q^{t}=0 $. For the case  $ Q^1\leq  0$,    it is obvious that $\{Q^{t}\}_{t=1}^{\infty}$   is a increasing number sequence with a upper bound $0$, and thus   we can also conclude $ \lim_{t\rightarrow \infty}Q^{t}= 0$. To sum up, if $\Delta c=0$, then \eqref{Q_2} follows.

If $\Delta c>0$  and   $ Q^1\geq \frac{1}{r-2}\Delta c$.  For the case  $ Q^1\geq 
 0$, it is obvious that there   exist $t_0$ such that $Q^{t_0+1}<0$. 
Without loss of generality, we suppose 
$0> Q^1\geq \frac{1}{r-2}\Delta c$.  
%then  $ Q^2= \frac{  rQ^1-\Delta c}{2} \geq \frac{  \frac{r}{r-2}\Delta c-\Delta c}{2}  \geq   \frac{1}{r-2}\Delta c $ and   $Q^1-Q^2=Q^1-\frac{  rQ^1-\Delta c}{2}=\frac{  (2-r)Q^1+\Delta c}{2}\geq 0$, i.e., $  Q^1\geq Q^2 \geq \frac{1}{r-2}\Delta c$.   For any $t\geq 2$, suppose $  Q^{t-1}\geq Q^{t}\geq \frac{1}{r-2}\Delta c$, then $ Q^{t+1}= \frac{  rQ^{t}-\Delta c}{2} \geq \frac{  \frac{r}{r-2}\Delta c-\Delta c}{2}  \geq   \frac{1}{r-2}\Delta c $ and    $Q^t-Q^{t+1}=Q^t-\frac{  rQ^t-\Delta c}{2}=\frac{  (2-r)Q^t+\Delta c}{2}\geq 0$, i.e.,
Then by mathematical induction,  we can prove that
$Q^{t}\geq Q^{t+1}\geq \frac{1}{r-2}\Delta c$ for any $t$.  
%Then    $\{Q^{t}\}_{t=1}^{\infty}$    is a decreasing number sequence with a lower bound $\frac{1}{r-2}\Delta c$,
Hence  we conclude \eqref{Q_2} from \eqref{Q_t}.

If $\Delta c>0$ and  $Q^1\leq \frac{1}{r-2}\Delta c$. Similar to the above process, we can get  $\{Q^{t}\}_{t=1}^{\infty}$ 
  is a increasing number sequence with a upper bound $\frac{1}{r-2}\Delta c$, and thus   we can also conclude \eqref{Q_2}.
\end{comment}
\end{proof}
Suppose the dynamic of best response does not converge to a Nash equilibrium, then the product quantity vector keeps changing. To study the changing pattern of product quantity, we start by focusing on the number of survival firms, which are firms that produce a positive output.
We use $m^t$ to represent the number of survival firms in round $t$.
We define
\begin{align*}
    \underline{n}=\lim_{t\rightarrow \infty}\inf\{m^k, k>t\}.
\end{align*}
Hence, there exists $T_1>T$ such that $\inf\{m^k,k>T_1\} = \underline{n}$.
Therefore, after round $T_1$, the first $\underline{n}$ firms would always survive by Lemma~\ref{lemma1}.

We focus on the dynamics after round $T_1$.
To make it simpler, we design a new game and a new instance of dynamics within it where we temporarily ignore the difference between $\underline{n}$ firms. In particular, we construct a new game where we use $\underline{n}$ identical firms to replace the first $\underline{n}$ possibly heterogeneous original firms. 
For $i\le \underline{n}$, firm $i$ has an identical cost 
$\underline{c}_i=\frac{1}{\underline{n}}\sum_{j=1}^{\underline{n}} c_j$.
For $i>\underline{n}$, firm $i$'s cost does not change, i.e., $\underline{c}_i=c_i$.
Then we design the following product quantity sequence $\underline{q}_i^t$ such that for any $t>0$,
$\underline{q}_i^t=\frac{1}{\underline{n}}\sum_{j=1}^{\underline{n}}q_j^{T_1+t}$ for $i\le \underline{n}$ and
$\underline{q}_i^t=q_i^{T_1+t}$ for $i>\underline{n}$. The following lemma shows that $\{\underline{\bm{q}}^t\}$ is still a valid product quantity sequence.

\begin{lemma}\label{lem:same_quantity}
Sequence $\{\underline{\bm{q}}^t\}$ is a product quantity sequence in best response dynamics in the new game. 
\end{lemma}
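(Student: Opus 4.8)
The plan is to verify directly that $\{\underline{\bm q}^t\}$ obeys the best-response recursion of the new game, i.e. that for every firm $i$ and every $t\ge 1$,
\[
\underline{q}_i^{t+1}=\max\Bigl\{0,\tfrac{A-\sum_{j\neq i}\underline{q}_j^t-\underline{c}_i}{2}\Bigr\},
\]
splitting into the two cases $i>\underline{n}$ and $i\le\underline{n}$ dictated by the construction. The single fact that drives everything is that the substitution preserves the market aggregate: since we only replaced the first $\underline{n}$ quantities by their common average, $\sum_{j=1}^n \underline{q}_j^t=\sum_{j=1}^n q_j^{T_1+t}$ for all $t$. Hence the competitors' total $\sum_{j\neq i}\underline{q}_j^t$ that firm $i$ faces in the new game differs from the original one only through firm $i$'s own (possibly averaged) contribution.

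For $i>\underline{n}$ both the cost $\underline{c}_i=c_i$ and the quantity $\underline{q}_i^t=q_i^{T_1+t}$ are untouched, so by aggregate preservation $\sum_{j\neq i}\underline{q}_j^t=\sum_{j\neq i}q_j^{T_1+t}$, and the required identity is exactly the original update rule~\eqref{q_i^{t+1}} applied at round $T_1+t$, which the original sequence satisfies. This case needs no positivity argument because the $\max$ is carried along verbatim.

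The real work is the case $i\le\underline{n}$, and here I would lean on the defining property of $\underline{n}$. By Lemma~\ref{lemma1} the surviving firms after round $T$ form a prefix, and since $\inf\{m^k:k>T_1\}=\underline{n}$, every firm $i\le\underline{n}$ produces a strictly positive quantity at all rounds $>T_1$. Consequently, for $i\le\underline{n}$ and $t\ge 1$ the original update is untruncated, i.e. $q_i^{T_1+t+1}=\tfrac12\bigl(A-\sum_{j\neq i}q_j^{T_1+t}-c_i\bigr)$ with the $\max$ inactive. Averaging these $\underline{n}$ affine equations over $i=1,\dots,\underline{n}$, and using $\tfrac1{\underline{n}}\sum_{i\le\underline{n}}c_i=\underline{c}_i$ together with aggregate preservation, collapses the right-hand side to $\tfrac12\bigl(A-(S^t-\bar q^t)-\underline{c}_i\bigr)$, where $S^t-\bar q^t$ is precisely the competitor total $\sum_{j\neq i}\underline{q}_j^t$ for a firm $i\le\underline{n}$ in the new game. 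This shows $\underline{q}_i^{t+1}=\bar q^{t+1}$ equals the untruncated best response; since each $q_i^{T_1+t+1}>0$ their average is positive, so the $\max$ in the new game is also inactive and the identity holds as stated.

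The main obstacle I anticipate is this $i\le\underline{n}$ case, and specifically the justification that the nonlinear $\max$ may be removed: averaging only commutes with the best-response map while that map is affine, so the argument genuinely requires the survival property guaranteeing positivity both for the original quantities (to linearize the updates being averaged) and for the resulting average (to reinstate the $\max$ in the new game). Once linearity is secured, the remaining computation is a routine rearrangement of sums, exploiting that $\sum_{i\le\underline{n}}\sum_{j\neq i}q_j^{T_1+t}=\underline{n}\sum_{j=1}^n q_j^{T_1+t}-\sum_{i\le\underline{n}}q_i^{T_1+t}$.
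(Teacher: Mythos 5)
Your proposal is correct and follows essentially the same route as the paper's proof: the case $i>\underline{n}$ is handled by aggregate preservation applied to the original update rule, and the case $i\le\underline{n}$ by averaging the untruncated affine updates, with the survival property (positivity after round $T_1$) justifying removal of the $\max$. In fact your write-up is slightly more careful than the paper's on one point—explicitly reinstating the $\max$ at the end by noting that the average of strictly positive quantities is positive—which the paper leaves implicit.
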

We need to show the product quantity $\{\underline{\bm{q}}^{t+1}\}$ is still a best response to $\{\underline{\bm{q}}^t\}$,
the idea is to divide the proof into two parts: one part addresses the case where $i>\underline{n}$, and the other part deals with the case where $i\leq \underline{n}$.
%y considering the first $\underline{n}$ firms and the rest firms separately.

\begin{proof} 
We first show that for firm $i>\underline{n}$, it still makes the best response decision in the designed product quantity sequence. Since it chooses the best strategy in the sequence $\{\bm{q}^t\}$, we have
\begin{equation*}
    q_i^{t+T_1}=\frac12 \max \left\{ A-\sum_{j \neq i} q_j^{t+T_1}-c_i, 0\right\}.
\end{equation*}
According to the definition of $\underline{\bm{q}}^t$, we have
\begin{align*}
\underline{q}_i^t&=\frac{1}{2} \max \left\{ A-\sum_{j \leq \underline{n}} q_j^{t+T_1}-\sum_{j>\underline{n}, j \neq i} q_j^{t+T_1}-c_i, 0 \right\} \\
& =\frac{1}{2}  \max \left\{A-\sum_{j \leq \underline{n}} \underline{q}_j^t-\sum_{j > \underline{n}, j \neq i} \underline{q}_j^t-\underline{c}_i,0 \right\} \\
& =\frac{1}{2}  \max \left\{A-\sum_{j \neq i} \underline{q}_j^t-\underline{c}_i, 0\right\}.
\end{align*}
Hence, firm $i$ indeed uses the optimal strategy. 

Then, we consider the first $\underline{n}$ firms. For $i \leq \underline{n}$, since it uses the best response strategy in the sequence $\{\bm{q}^t\}$ and it always has a positive product quantity, we have 
\begin{equation*}
q_i^{t+T_1}=\frac{1}{2}\left(A-\sum_{j \neq i} q_j^{t+T_1}-c_i\right).
\end{equation*}
Sum up for all $i\leq \underline{n}$, we have
\begin{align*}
\sum_{i \leq \underline{n}} q_i^{t+T_1} & =\frac{1}{2} \left(\underline{n} A-\sum_{i \leq \underline{n}} c_i-\underline{n} \sum_{j > \underline{n}} q_j^{t+T_1}-(\underline{n}-1) \sum_{j \leq \underline{n}} q_j^{t+T_1}\right), \\
\underline{n}\underline{q}_i^t & =\frac{1}{2}\left(\underline{n}A-\underline{n} \underline{c}_i-\underline{n}\sum_{j>\underline{n}} \underline{q}_j^t -\underline{n} \sum_{j \neq i, j \leq \underline{n}} \underline{q}_j^t\right), \\
\underline{q}_i^t & =\frac{1}{2}\left(A-\sum_{j \neq i} \underline{q}_j^t -\underline{c}_i\right).
\end{align*}
Therefore, all firms are using the best response strategy in the new game.  
\end{proof}

When we focus on the number of survival firms after round $T_1$, the above lemma tells us that, w.l.o.g., we can assume the first $\underline{n}$ firms share the same cost and the same strategies.
% \red{and that will not change the quantity of goods produced by the last $n-\underline{n}$ firms in the future}.

Next, we show the firms' behavior will eventually converge to either a Nash equilibrium or an oscillation. 
We first consider the best response dynamics in the designed game $\{\underline{\bm{q}}^t\}$ and show there is a cyclical oscillation. The result is based on the cyclical variation in the production quantity of the first $\underline{n}$ firms. 
Then we extend the result to the best response dynamics in the original game.

\begin{lemma}\label{lemma3}
Both the original and new dynamics converge to either periodic oscillation or Nash equilibrium, and share the same length of oscillation period.
\end{lemma}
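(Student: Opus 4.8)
The plan is to prove the lemma in two stages: first show that the symmetrized dynamics $\{\underline{\bm q}^t\}$ of the new game approaches a periodic orbit (or a Nash equilibrium), and then transfer this to the original dynamics while matching the period. The single observation that drives the transfer is that the symmetrization of Lemma~\ref{lem:same_quantity} preserves the aggregate output exactly: from the construction, $\underline s^{t}:=\sum_{j=1}^n \underline q_j^{t}=\sum_{j=1}^n q_j^{T_1+t}=:s^{T_1+t}$ for every $t$. Since each firm's update in Eq.~\eqref{q_i^{t+1}} depends on the others only through this aggregate, the two games are driven by one and the same aggregate sequence up to the shift $T_1$, and this is what will force equal periods.

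First I would analyze the new game. Writing the update as a map $\Phi:\underline{\bm q}^t\mapsto \underline{\bm q}^{t+1}$, note that $\Phi$ maps the compact box $[0,A/2]^n$ into itself, is continuous and piecewise affine, and is \emph{order-reversing} for the coordinatewise order, because each coordinate of $\Phi(\bm q)$ is nonincreasing in every $q_j$ with $j\neq i$ and is independent of $q_i$. Hence $\Phi^2$ is order-preserving. I would then use the symmetric structure: the first $\underline n$ firms are identical and all survive after the reduction, so their common value $x^t=\underline q_1^t=\cdots=\underline q_{\underline n}^t$ collapses the state to $(x^t,\underline q_{\underline n+1}^t,\dots,\underline q_n^t)$. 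By the definition of $\underline n=\liminf_t m^t$, exactly $\underline n$ firms survive infinitely often; at such a ``reset'' round the non-core coordinates all vanish, so the state lies on the one-parameter ray $(x^t,0,\dots,0)$. Two reset states are therefore \emph{automatically comparable} (they differ only in the scalar $x^t$), and choosing two reset rounds of equal parity, $t_0$ and $t_0+2k$, makes $\underline{\bm q}^{t_0}$ and $\underline{\bm q}^{t_0+2k}=\Phi^{2k}(\underline{\bm q}^{t_0})$ comparable. Applying the order-preserving map $\Phi^{2k}$ then makes $\{\underline{\bm q}^{t_0+2kj}\}_j$ monotone and bounded, hence convergent to a fixed point of $\Phi^{2k}$; together with Lemma~\ref{lemma1} (quantities sorted inversely by cost) this identifies a periodic point in the limit set and upgrades, via the monotone even/odd subsequences $\underline{\bm q}^{\text{even}}\to\bm a$, $\underline{\bm q}^{\text{odd}}\to\bm b=\Phi(\bm a)$ with $\bm a=\Phi(\bm b)$, to convergence of the orbit to a short cycle (a Nash equilibrium when $\bm a=\bm b$).

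Next I would transfer to the original game and match periods. If the new dynamics converges to an orbit of minimal period $L$, then $\underline s^{t}$, and hence the original aggregate $s^{T_1+t}$, is asymptotically $L$-periodic. For each original firm that eventually always survives, Eq.~\eqref{q_i^{t+1}} rewrites as the scalar recursion $q_i^{t+1}=\tfrac12 q_i^t+\tfrac12\bigl(A-s^t-c_i\bigr)$, a contraction with factor $\tfrac12$ driven by the asymptotically $L$-periodic forcing $A-s^t-c_i$; its transient decays and $q_i^t$ converges to the unique $L$-periodic response, while firms that eventually stay at zero are trivially $L$-periodic. Thus the original dynamics also converges to a periodic orbit or equilibrium. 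Finally, in either game the orbit period equals its aggregate period: the aggregate period divides the orbit period trivially, and the contraction argument shows each firm's limiting response has period dividing the aggregate period, so the orbit period divides the aggregate period as well. Since the two aggregate sequences coincide up to the shift $T_1$, the two games share the same oscillation length.

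The main obstacle is the convergence step for the new game. Comparability is free at the reset rounds, but it comes with step $2k$ rather than step $2$, so the order-preserving argument directly yields only subsequential convergence to a fixed point of $\Phi^{2k}$; promoting this to convergence of the \emph{entire} orbit, and controlling the non-core firms in the rounds \emph{between} resets, is the delicate point, precisely because $\Phi$ is merely piecewise affine and the surviving set can change from round to round. I expect the leverage to come from the reduction already in place: the identical, permanently surviving core collapses $\underline n$ coordinates to the scalar $x^t$, Lemma~\ref{lemma1} orders the remaining coordinates by cost, and the minimality of $\underline n$ supplies infinitely many reset rounds at which the state is pinned down by $x^t$ alone, so that comparing the orbit across resets is where full-orbit convergence should be recovered.
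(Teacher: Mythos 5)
Your overall strategy coincides with the paper's. For the new game you exploit that the best-response map $\Phi$ is order-reversing (so its even iterates are order-preserving), use the reset rounds where exactly $\underline{n}$ identical firms survive to obtain two comparable states an even number of steps apart, and conclude monotone, bounded, hence convergent behavior along that arithmetic progression; for the original game you transfer via the fact that each firm's update is a factor-$\tfrac12$ contraction in its own quantity driven by the aggregate, which the two games share up to the shift $T_1$. The paper's transfer step is the identity $q_1^{t+1}-q_1^{t+\Delta+1}=\tfrac12(q_1^t-q_1^{t+\Delta})$, which is exactly your contraction in different clothing; your bookkeeping ``orbit period equals aggregate period, for both games'' is, if anything, a cleaner route to the equal-period claim than the paper's.

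The genuine problem is that you leave the central convergence step for the new game unfinished, and the one concrete mechanism you offer for it is wrong. You assert that ``monotone even/odd subsequences'' converge to a 2-cycle $\bm a,\ \bm b=\Phi(\bm a)$ with $\bm a=\Phi(\bm b)$: monotonicity is only available along the $2k$-spaced subsequence launched at a reset round --- the plain even subsequence $\{\underline{\bm q}^{2j}\}_j$ carries no comparability and hence no monotonicity --- and a period-two conclusion is neither obtainable at this stage nor needed (that is the content of Theorem~\ref{theorem2}, proved separately; Lemma~\ref{lemma3} only asserts convergence to \emph{some} periodic oscillation). You then flag the promotion from convergence of $\{\underline{\bm q}^{t_0+2kj}\}_j$ to convergence of the whole orbit as an unresolved ``delicate point.'' It is not delicate; it is one line, and it is exactly how the paper closes the argument: $\Phi$ is continuous, so for every fixed offset $\ell$ one has $\underline{\bm q}^{t_0+2kj+\ell}=\Phi^{\ell}(\underline{\bm q}^{t_0+2kj})\rightarrow\Phi^{\ell}(\bm p_0)=:\bm p_\ell$, and $\bm p_{2k}=\lim_j \underline{\bm q}^{t_0+2k(j+1)}=\bm p_0$, so the limits $\bm p_0,\dots,\bm p_{2k-1}$ form an orbit of period dividing $2k$ to which the entire trajectory converges; no control of the non-core firms ``between resets'' is required. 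A smaller slip in your transfer step: the dichotomy ``eventually always survives / eventually stays at zero'' omits firms of index above $\underline{n}$ that enter and exit production forever (which is precisely what happens in the oscillations the paper later classifies); these are covered either because their original quantities equal the new-game quantities by construction, or because $q_j^{t+1}=\max\{0,\tfrac12(q_j^t+A-s^t-c_j)\}$ is still a $\tfrac12$-contraction in $q_j^t$, so your forced-contraction argument applies to them verbatim.
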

\begin{proof}
(1) We first consider the best response dynamics $\{\underline{\bm{q}}^t\}_{t \in \mathbb{N}}$ in the new game, by definition, there exist infinite $t$'s such that we have $m^t=\underline{n}$.
Among them, we can always choose $t_1<t_2$ such that $t_2-t_1$ is an even number. 
Otherwise, there must exist $t_1<t'<t_2$ such that $t_2-t'$ and $t'-t_1$ are both odd, we can choose $t_1$ and $t_2$ instead.

According to the new dynamic, the first \( \underline{n} \) entries of $\underline{\bm{q}}^t$ are identical, and the remaining entries are zero, so it must be that either \( \underline{\bm{q}}^{t_1} \leq \underline{\bm{q}}^{t_2} \), or \( \underline{\bm{q}}^{t_1} > \underline{\bm{q}}^{t_2} \) element-wisely. We will only consider the first case. The result for the case $\underline{\bm{q}}^{t_1}\ge \underline{\bm{q}}^{t_2}$ holds similarly. Next we show that if $\underline{\bm{q}}^{t_1}\le \underline{\bm{q}}^{t_2}$, that is, $\underline{q}^{t_1}_i\le \underline{q}^{t_2}_i$ for each $i\in[n]$, then for any even number $r$, we have $\underline{\bm{q}}^{t_1+r}\leq \underline{\bm{q}}^{t_2+r}$. By our definition, we have for each $i\in[n]$,
\begin{align*}
  \underline{q}_i^{t_1+1} = \max \left\{ \frac{1}{2}( A - \sum_{j\ne i}\underline{q}_j^{t_1} - c_i ), 0\right\}, 
\end{align*}
and
\begin{align*}
  \underline{q}_i^{t_2+1} = \max \left\{ \frac{1}{2}( A - \sum_{j\ne i}\underline{q}_j^{t_2} - c_i ), 0\right\}.
\end{align*}
Note that in this case, we have $\sum_{j\ne i}\underline{q}_j^{t_1}\le \sum_{j\ne i}\underline{q}_j^{t_2}$, so $\underline{q}_i^{t_1+1}\ge \underline{q}_i^{t_2+1}$ for any $i\in[n]$. By repeating the above process, we have that $\underline{q}_i^{t_1+2}\le \underline{q}_i^{t_2+2}$ for each $i\in[n]$.

Since $\Delta:=t_2-t_1$ is even, we have that the sequence
$\underline{\bm{q}}^{t_1}\le \underline{\bm{q}}^{t_1+\Delta}\le \underline{\bm{q}}^{t_1 + 2\Delta}\le \underline{\bm{q}}^{t_1 + 3\Delta}\le\cdots$.
Since $\underline{q}_i^t$ is bounded by $A$ for any $i$ and $t$, we must have 
the sequence $\{\underline{\bm{q}}^{ t_1+k \Delta} \}_{k\in \mathbb{N}}$ converges to a limit. 

Since the sequence $\{\underline{\bm{q}}^{t_1+k\Delta}\}_{k\in \mathbb{N}}$ converges, we have the sequence $\{\underline{\bm{q}}^{t_1+k\Delta+1}\}_{k\in \mathbb{N}}$ converges. It implies the sequence $\{\underline{\bm{q}}^{t_1+k\Delta+2}\}_{k\in \mathbb{N}}$ converges and so on. Therefore, for any $\ell<\Delta$, we have the sequence $\{\underline{\bm{q}}^{t_1+k\Delta+\ell}\}_{k\in \mathbb{N}}$ converges. We have now proven that, for the new dynamic $\{\underline{\bm{q}}^t\}_{t\in \mathbb{N}}$, if it does not converge to a Nash equilibrium\footnote{Nash equilibrium can also be regarded as an oscillation whose period equals to one.}, it will result in periodic oscillations.

(2) 
Since the new dynamics $\{\underline{\bm{q}}^t\}_{t\in \mathbb{N}}$ can converge to a periodic oscillation, it implies the total production of the first $\underline{n}$ firms and the production of the remaining firms can also converge to a periodic oscillation. The length of the oscillation period is the same as the one in the new dynamics, denoted by $\Delta$.
Hence, we have already shown that  $\sum_{i=1}^{\underline{n}}q_{i}^t=\sum_{i=1}^{\underline{n}}q_{i}^{t+\Delta}$ and $q_{j}^t=q_{j}^{t+\Delta}$, for any index $j\in [\underline{n}+1,n]$. It is left to prove that the production of each of the first $\underline{n}$ firms converges to a periodic oscillation with length $\Delta$.

We consider the first firm. For the production of the first firm at time $t>T_1$ and $t+\Delta$, if $q_{1}^t$ and $q_{1}^{t+\Delta}$ are not equal, there are two cases to consider. We only provide the proof for the case $q_{1}^t<q_{1}^{t+\Delta}$. The case where $q_{1}^t>q_{1}^{t+\Delta}$ follows similarly. According to the the best response dynamics, we have:
\begin{align*}
  q_{1}^{t+1} = \frac{1}{2}( A - \sum_{i=1}^{\underline{n}}q_{i}^t- \sum_{j=\underline{n}+1}^{n}q_{j}^t - c_1 + q_{1}^t ), 
\end{align*}
and
\begin{align*}
  q_{1}^{t+\Delta+1} = \frac{1}{2}( A - \sum_{i=1}^{\underline{n}}q_{i}^{t+\Delta}- \sum_{j=\underline{n}+1}^{n}q_{j}^{t+\Delta} - c_1 + q_{1}^{t+\Delta} ).
\end{align*}
Subtract these two equations, we get
\begin{align*}
   q_{1}^{t+1}- q_{1}^{t+\Delta+1} = \frac12(q_{1}^{t}- q_{1}^{t+\Delta}).
\end{align*}

Thus we have $q_{1}^{t+1}<q_{1}^{t+\Delta+1}$. For any number $r$, it follows that $q_{1}^{t+r}<q_{1}^{t+\Delta+r}$. Furthermore, we obtain the inequality chain
$q_{1}^t< q_{1}^{t+\Delta}< q_{1}^{t+2\Delta}< q_{1}^{t+3\Delta}<\cdots$. As production can not be arbitrarily large, the sequence
$\{q_{1}^{t+k\Delta}\}_{k\in \mathbb{N}}$ converges to a limit.  

Since the sequence $\{q_1^{t+k\Delta}\}_{k\in \mathbb{N}}$ converges, the sequence $\{q_1^{t+k\Delta +1}\}_{k\in \mathbb{N}}$ must also converge. This implies that the sequence $\{q_1^{t+k\Delta +2}\}_{k\in \mathbb{N}}$ converges, and so on. Thus, for any $l<\Delta$, the sequence $\{q_1^{t+k\Delta +l}\}_{k\in \mathbb{N}}$ converges as well. Then the sequence $\{q_1^{t}\}_{t\in \mathbb{N}}$ converges to an oscillation and the length of the oscillation is a factor of $\Delta$.
By the same reasoning, the production levels of the other $\underline{n}-1$ firms will also converge. The length of each cyclical oscillation is always a factor of $\Delta$.
Combining the remaining firms except the first $\underline{n}$ firms, we can conclude that the original dynamics $\{\bm{q}^{t} \}_{t\in \mathbb{N}}$ converge to a periodic oscillation with length $\Delta$ or a Nash equilibrium.  
\end{proof}

Oscillations of period two have been observed in the literature. However, no oscillation of longer periods is observed or studied. Our main theorem shows that there is no oscillation of period greater than two.

\begin{theorem}\label{theorem2}
The period of oscillation cannot be greater than two.
\end{theorem}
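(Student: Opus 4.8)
The plan is to reduce to the new game of Lemma~\ref{lemma3} and then exploit an order‑reversing (antitone) structure of the best‑response map. By Lemma~\ref{lemma3} the original and the new dynamics share the same oscillation period, so it suffices to bound the period of $\{\underline{\bm q}^t\}$. Write the best‑response operator as $F$, so that $\underline{\bm q}^{t+1}=F(\underline{\bm q}^t)$ with $F_i(\bm q)=\max\{0,\tfrac12(A-c_i-\sum_{j\ne i}q_j)\}$ (where $c_i=\underline c$ for $i\le\underline n$). The one‑step inequality already used inside the proof of Lemma~\ref{lemma3} says precisely that $F$ is \emph{antitone} for the componentwise order: if $\bm q\le\bm q'$ then $\sum_{j\ne i}q_j\le\sum_{j\ne i}q'_j$ and hence $F_i(\bm q)\ge F_i(\bm q')$ for every $i$. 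Consequently $G:=F\circ F$ is \emph{monotone} (order‑preserving).

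The core of the argument is the following collapse principle for $G$. Suppose we can exhibit, on the limit cycle, one round $s$ at which the states two steps apart are comparable, say $\underline{\bm q}^s\le\underline{\bm q}^{s+2}=G(\underline{\bm q}^s)$. Applying the monotone map $G$ repeatedly yields a nondecreasing sequence $\underline{\bm q}^s\le\underline{\bm q}^{s+2}\le\underline{\bm q}^{s+4}\le\cdots$ that is periodic, hence constant; thus $\underline{\bm q}^{s+2}=\underline{\bm q}^s$. Applying $F$ once more gives $\underline{\bm q}^{s+3}=\underline{\bm q}^{s+1}$, so the whole cycle repeats with period dividing two. The case $\underline{\bm q}^s\ge\underline{\bm q}^{s+2}$ is symmetric. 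Everything therefore reduces to producing a single comparable pair $(\underline{\bm q}^s,\underline{\bm q}^{s+2})$.

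To produce such a pair I would use the rounds in which only the $\underline n$ baseline firms survive. By the definition of $\underline n$ these rounds occur infinitely often, and at such a round the state is \emph{scalar}: the first $\underline n$ coordinates equal the common value $x^t$ and all remaining coordinates vanish. Because its extra coordinates are zero, such a scalar state lies componentwise below every orbit state whose baseline value is at least $x^t$; in particular it is the componentwise minimum of the whole cycle as soon as its baseline value is the smallest one occurring in the cycle. The plan is thus to choose $t^*$ to be a round at which $x^{t^*}=\min_t x^t$ \emph{and} no higher‑cost firm survives: then $\underline{\bm q}^{t^*}$ is the minimum element of the orbit, so trivially $\underline{\bm q}^{t^*}\le\underline{\bm q}^{t^*+2}$, and the collapse principle finishes the proof. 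When every round already has $m^t=\underline n$, the dynamics is purely the scalar affine recursion $x^{t+1}=\tfrac12(A-\underline c-(\underline n-1)x^t)$ of identical firms, whose two‑step map is affine with slope $((\underline n-1)/2)^2$, so a bounded nontrivial cycle forces period two directly (the boundary case $\underline n=3$).

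The main obstacle is exactly the coincidence demanded in the previous step: showing that the minimum baseline output is attained at a round where \emph{no} extra firm survives. Intuitively the baseline output $x^t=\tfrac12(A-\underline c-(S^{t-1}-x^{t-1}))$ is smallest precisely when the aggregate production at the previous round is largest, and this maximal ``pressure'' should also push every higher‑cost firm to zero; but an expensive firm that itself produced a large amount in the previous round faces less competition and may survive, so the implication is not formal. Making it rigorous is where Lemma~\ref{lemma1} is essential: the cost–output inversion gives $q_i^t\le x^t$ for every $i>\underline n$ after finite time, which lets one bound $S^{t-1}-q_i^{t-1}\ge S^{t-1}-x^{t-1}$ and thereby convert ``maximal pressure at $t^*-1$'' into the survival test $A-c_i-(S^{t^*-1}-q_i^{t^*-1})\le 0$ for every $i>\underline n$. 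Carefully combining this bound with the best‑response identity for the surviving firms is the step I expect to require the most care.
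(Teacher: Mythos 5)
Your overall architecture is the same as the paper's: reduce to the new game via Lemma~\ref{lemma3}, use the fact that the best-response map $F$ is antitone so $G=F\circ F$ is monotone, and invoke the collapse principle that a comparable pair $\underline{\bm q}^{s}\le\underline{\bm q}^{s+2}$ on a periodic orbit forces $\underline{\bm q}^{s+2}=\underline{\bm q}^{s}$. This is exactly the mechanism behind the paper's Eqs.~\eqref{eq:14}--\eqref{eq:17}, where the partial order is alternated through successive rows until the first and third rows coincide; your treatment of the constant-survivor case by the scalar affine recursion is also sound.

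However, the step you yourself flag as the ``main obstacle'' is a genuine gap, and your proposed route does not close it. You need the scalar round $s^*$ (only baseline firms surviving) to satisfy $\underline{\bm q}^{s^*}\le\underline{\bm q}^{s^*+2}$, which requires $x^{s^*}\le x^{s^*+2}$; but $s^*+2$ need not be a scalar round, so choosing $s^*$ to minimize the baseline output \emph{among scalar rounds} proves nothing --- what is needed is minimality over \emph{all} rounds of the cycle. Lemma~\ref{lemma1} does not deliver this: writing the survival test for an extra firm $i>\underline n$ at round $s^*$ as $2x^{s^*}\le x^{s^*-1}-\underline q_i^{s^*-1}+(c_i-\underline c)$, the bound $\underline q_i^{s^*-1}\le x^{s^*-1}$ from Lemma~\ref{lemma1} only makes the right-hand side nonnegative, not $\ge 2x^{s^*}$. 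Moreover, the heuristic ``maximal pressure kills the extra firms'' cannot be made local: in a two-row configuration with a scalar round $A$ and a round $B$ whose extra firms total $e>0$, one gets $\bigl(x^B-x^A\bigr)\bigl(1-\tfrac{\underline n-1}{2}\bigr)=\tfrac{e}{2}$, so for $\underline n\ge 4$ the sign reverses and the minimal baseline output would sit at the \emph{non}-scalar round. The paper closes this hole with a global argument around the cycle: it writes the difference inequalities $a_{i+1,1}-a_{i+1,2}\le\frac{(a_{i,1}-a_{i,2})+(c_2-c_1)}{2}$ for every row, telescopes them over the full cycle to obtain $\sum_{i}(a_{i,1}-a_{i,2})\le t(c_2-c_1)$ (Eq.~\eqref{eq:9}), and combines this with the sign-preserving halving chains \eqref{eq:10}--\eqref{eq:13} to conclude that the smallest zero-row satisfies $a_{1,1}\le c_2-c_1$ and is therefore componentwise below every row of the matrix. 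That cost-difference bookkeeping, which ties the baseline output at a scalar round to the cost gap and propagates the comparison through the non-scalar rounds, is the missing idea; nothing in your sketch substitutes for it.
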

According to Lemma~\ref{lemma3}, the original and new dynamics share the same period. Therefore, we will apply  techniques similar to those in Lemma~\ref{lem:same_quantity} to construct the new dynamic, with particular attention to analyzing this periodicity.
For a better illustration of the dynamics, we introduce a matrix to represent the quantity of each firm in each round.  
%%%%%%%%%%%%%%%%%%%%?
Given $n,t>0$, we say a matrix $M\in\mathbb{R}^{t\times n}_+$ is a {\em $t$-period quantity matrix} for $n$ firms, if $M_{i^{+1},j}$ is the updated quantity of $(M_{i,1},\ldots,M_{i,n})$ for firm $j$, where $i^{+1}:=(i+1) \pmod t + 1$. 
That is, all firms update their production quantities by using the previous row of the matrix. Here is an example of the $2$-period quantity matrix. There are four firms with $A=20$ and $c_1=c_2=c_3=c_4=0$. There is a $2$-period oscillation in the dynamics described by the following $2$-period quantity matrix. 
\begin{align*}
\begin{bmatrix}
    10& 10 & 10 &10\\
    0&  0 & 0 & 0
\end{bmatrix}.
\end{align*}
\begin{proof}
%We will proceed with a case-by-case analysis based on the status of the surviving firms. 
%For case 1, we discuss the scenario where the number of surviving firms does not remain constant. If a set of firms have positive quantity supplies at the same time, we can replace them with the same amount of identical firms with their average cost. This can be proved using the similar arguments following Lemma \ref{lemma1}. Through this simplification, we can assume firms have at most $n$ different costs. To make it simpler, we employ abbreviations for the quantity matrices we only retain one element to represent all the identical firms (same quantity and same cost). 
%For case 2, we discuss the scenario where the number of surviving firms remains constant.

We will conduct a detailed analysis on a case-by-case basis, focusing on two cases based on the status of the surviving firms.

In the first case, we examine the situation where the number of surviving firms fluctuates. If a group of firms simultaneously supply positive quantities, they can be substituted with an equivalent number of identical firms, each with their average cost. 
This substitution can be justified using similar arguments to those presented in Lemma ~\ref{lem:same_quantity}. 
% By employing this simplification, we can assume that firms exhibit at most $n$ different costs. 
To streamline our discussion, we use abbreviations for the quantity matrices, retaining only one element to represent all identical firms (those with the same quantity and cost).

In the second case, we consider the case where the number of surviving firms remains constant. Similar to the first case, we use abbreviations for the quantity matrices to simplify our discussion.

%According to the assumption of this case, there must be a row in the matrix of the form $[x\ 0\ 0\ 0\ \dots\ 0]$. Since this matrix constitutes a cycle, we can adjust the row of the form $[x\ 0\ 0\ 0\ \dots\ 0]$ to the first row. If there are several rows of the form $[x\ 0\ 0\ 0\ \dots\ 0]$, we adjust the row with the smallest $x$ to the first row.  We define the set of row indices for these rows as $\mathcal{F}$. Therefore, our matrix can be adjusted to the following form:

\paragraph{\bf Case 1} Based on the assumption in this case, the matrix must contain at least one row of the form $[x\ 0\ 0\ 0\ \dots\ 0]$. Given that this matrix forms a cycle, we can rearrange it so that a row of the form $[x\ 0\ 0\ 0\ \dots\ 0]$ becomes the first row. If there are multiple rows of this form, we choose the one with the smallest value of $x$ as the first row. We denote the set of indices for these rows as $\mathcal{F}$. After this adjustment, the matrix can be represented as:
\begin{align*}
\begin{bmatrix}
    a_{1,1}&0&\cdots&0\\
    a_{2,1}&a_{2,2}&\cdots&a_{2,n}\\
    a_{3,1}&a_{3,2}&\cdots&a_{3,n}\\
    \vdots & \vdots  & & \vdots   \\
    a_{t,1}&a_{t,2}&\cdots&a_{t,n}\\
\end{bmatrix}.
\end{align*}
%In this form, the first row is $[x\ 0\ 0\ 0\ \dots\ 0]$, and the remaining rows contain non-negative elements. Next, we will conduct a case-by-case analysis. Our goal is to prove that, after such an adjustment, the first row has a partial order relation with the remaining rows and is less than all the other rows, i.e.,
Here, the first row is $[x\ 0\ 0\ 0\ \dots\ 0]$, and the remaining rows contain non-negative elements. We will now proceed with a case-by-case analysis. Our objective is to demonstrate that after this rearrangement, the first row maintains a partial order relation with all other rows, implying it is less than or equal to each of the other rows:
\begin{align*}
[a_{1,1}\ 0\ 0\ \dots\ 0]\leq[a_{i,1}\ a_{i,2}\ a_{i,3}\ \dots\ a_{i,n}].
\end{align*}

%If $a_{1,1}=0$, it is evident that this holds true. If $a_{1,1}>0$: According to the optimal update formula Eq.~\eqref{q_i^{t+1}}, we can derive the following expressions for the first two columns of the matrix:

When $a_{1,1} < 0$, it will not occur because the production quantity cannot be negative. When $a_{1,1} = 0$, it holds obviously.
While for the case that $a_{1,1} > 0$: 
We express the first two columns of the matrix using the update formula Eq.~\eqref{q_i^{t+1}}. For example, we can list the following relation for the first row, where $k_i$  represents the number of firms in the $k$-th category:

\begin{align}
a_{1,1}&=\frac{A-(k_1-1)a_{t,1}-k_2a_{t,2}-\dots -k_na_{t,n}-c_1}{2}\label{eq:1},\\
a_{1,2}&=0\ge\frac{A-k_1a_{t,1}-(k_2-1)a_{t,2}-\dots -k_na_{t,n}-c_2}{2}\label{eq:2}.
% a_{2,1}&=\frac{A-(k_1-1)a_{1,1}-k_2a_{1,2}-\dots -k_na_{1,n}-c_1}{2}\label{eq:3},\\
% a_{2,2}&\ge\frac{A-k_1a_{1,1}-(k_2-1)a_{1,2}-\dots -k_na_{1,n}-c_2}{2}\label{eq:4},\\
% &\cdots \nonumber\\
% a_{t,1}&=\frac{A-(k_1-1)a_{t-1,1}-k_2a_{t-1,2}-\dots -k_na_{t-1,n}-c_1}{2}\label{eq:5},\\
% a_{t,2}&\ge\frac{A-k_1a_{t-1,1}-(k_2-1)a_{t-1,2}-\dots -k_na_{t-1,n}-c_2}{2}\label{eq:6}.
\end{align}

Subtracting adjacent equations, such as subtracting Eq.~\eqref{eq:2} from Eq.~\eqref{eq:1}, yields the following inequalities:

\begin{align}
a_{1,1}-a_{1,2}&\leq\frac{a_{t,1}-a_{t,2}+c_2-c_1}{2}\nonumber.
\end{align}

Similarly, we can write the expressions for the first two columns of the remaining rows after taking the differences as follows:
\begin{align}
a_{2,1}-a_{2,2}&\leq\frac{a_{1,1}-a_{1,2}+c_2-c_1}{2}\nonumber,\\
&\cdots\nonumber \\
a_{t,1}-a_{t,2}&\leq\frac{a_{t-1,1}-a_{t-1,2}+c_2-c_1}{2}\nonumber.
\end{align}

Summing these inequalities results in:
\begin{align}
(a_{1,1}-a_{1,2})+(a_{2,1}-a_{2,2})+\cdots+(a_{t,1}-a_{t,2})\leq t(c_2-c_1)\label{eq:9}.
\end{align}

%Now we consider the relationship between $(a_{1,1}-a_{1,2})$, $(a_{2,1}-a_{2,2})$, $\ldots$, $(a_{t,1}-a_{t,2})$ and $(c_2-c_1)$ respectively.
%Because in this situation, the number of surviving firms will change. There exists $i\in \mathcal{F}$, $(i+1)\not\in \mathcal{F}$ and $(i-1)^*\in [t]$, such that the matrix contain the following three adjacent rows:

Next, we examine the relationship between each $(a_{i,1} - a_{i,2})$ and $(c_2 - c_1)$.
In this context, the number of surviving firms changes in each turns. There exists an index \(i \in \mathcal{F}\) such that \((i+1) \not\in \mathcal{F}\) and \((i-1)^* \in [t]\). Consequently, the matrix contains the following three adjacent rows:
\begin{align*}
\begin{bmatrix}
    a_{(i-1)^*,1}&a_{(i-1)^*,2}&\cdots&a_{(i-1)^*,n}\\
    a_{i,1}&0&\cdots&0\\
    a_{i+1,1}&a_{i+1,2}&\cdots&a_{i+1,n}
\end{bmatrix},
\end{align*}
where $(i-1)^*$ represents the case where if $i=1$, then $(i-1)^*=t$; and $(i-1)^*$ remains unchanged as $i-1$ otherwise.

For the last two rows of expressions, by taking the difference of the first two columns, we derive the following expressions:
\begin{align}
(a_{i,1}-0)-(c_2-c_1)&\leq\frac{(a_{(i-1)^*,1}-a_{(i-1)^*,2})-(c_2-c_1)}{2}\label{eq:10},\\
(a_{i+1,1}-a_{i+1,2})-(c_2-c_1)&=\frac{(a_{i,1}-0)-(c_2-c_1)}{2}.\label{eq:11}
\end{align}
%\shan{case $a_{i+1,2}>0$, case $a_{i+1,2}=0$}

It can be observed that $(a_{i,1}-0)$ and $(a_{i+1,1}-a_{i+1,2})$ are either both less than $(c_2-c_1)$ or both greater than $(c_2-c_1)$. 
More generally, if $i \in \mathcal{F}$, $i < j$, and $i+1, \dots, j \not\in \mathcal{F}$, then $(a_{i,1} - 0)$ and $(a_{j,1} - a_{j,2})$ are either both less than $(c_2 - c_1)$ or both greater than $(c_2 - c_1)$.
Therefore, the problem reduces to analyzing the relationship between $(a_{i,1}-0)$ and $(c_2-c_1)$, where $i\in \mathcal{F}$. 
Additionally, if $(a_{i,1}-0)<(c_2-c_1)$, then 
\begin{align}
(a_{i,1}-0)<(a_{i+1,1}-a_{i+1,2})<\cdots<(a_{j,1}-a_{j,2})<(c_2-c_1)\label{eq:12}.
\end{align}
If $(a_{i,1}-0)>(c_2-c_1)$, then 
\begin{align}
(a_{i,1}-0)>(a_{i+1,1}-a_{i+1,2})>\cdots>(a_{j,1}-a_{j,2})>(c_2-c_1)\label{eq:133}.
\end{align}
And if $(a_{i,1}-0)=(c_2-c_1)$, then 
\begin{align}
(a_{i,1}-0)=(a_{i+1,1}-a_{i+1,2})=\cdots=(a_{j,1}-a_{j,2})=(c_2-c_1)\label{eq:13}. 
\end{align}

Based on the placement of the smallest value of $x$ in the first row, we have $a_{1,1}\leq a_{i,1}$ for $i \in \mathcal{F}$. From Eq.~\eqref{eq:9}, it follows that $(a_{1,1}-0)\leq (c_2-c_1)$. 
Additionally, utilizing Eq.~\eqref{eq:12}, Eq.~\eqref{eq:133} and Eq.~\eqref{eq:13}, we deduce that $a_{1,1}\leq a_{j,1}$ where $j\in [t]$.

%Now we can establish that the first row has a partial order relationship with any other row, such that
This allows us to establish a partial order relationship between the first row and any other row:
\begin{align}
[a_{1,1}\ 0\ 0\ \dots\ 0]\leq[a_{j,1}\ a_{j,2}\ a_{j,3}\ \dots\ a_{j,n}]\label{eq:14}.
\end{align}

Referring to Eq.~\eqref{eq:14} and Eq.~\eqref{q_i^{t+1}}, we further establish that the second row holds a partial order relationship with the other rows:
\begin{align}
[a_{2,1}\ a_{2,2}\ a_{2,3}\ \dots\ a_{2,n}]\ge[a_{j,1}\ a_{j,2}\ a_{j,3}\ \dots\ a_{j,n}]\label{eq:15}.
\end{align}

Similarly, based on Eq.~\eqref{eq:15} and Eq.~\eqref{q_i^{t+1}}, we find that the third row has a partial order relationship with the other rows:
\begin{align}
[a_{3,1}\ a_{3,2}\ a_{3,3}\ \dots\ a_{3,n}]\leq[a_{j,1}\ a_{j,2}\ a_{j,3}\ \dots\ a_{j,n}]\label{eq:16}.
\end{align}

Combining to Eq.~\eqref{eq:14} and Eq.~\eqref{eq:16}, we observe that the first row is equivalent to the third row:
\begin{align}
[a_{1,1}\ 0\ 0\ \dots\ 0]=[a_{3,1}\ a_{3,2}\ a_{3,3}\ \dots\ a_{3,n}]\label{eq:17}.
\end{align}

%Therefore, for this case, when the number of surviving firms changes, only a two-cycle oscillation exists.

Hence, under these conditions, when the number of surviving firms fluctuates, only a two-cycle oscillation can occur. Since the original and new dynamics share the same period, fluctuations in the number of surviving firms will cause the original dynamic to exhibit only a two-cycle oscillation.

%When the number of surviving firms remains unchanged, we cannot approximate the company's output and cost using the above method of averaging. Instead, we represent each company individually using a column in the matrix as follows:

\paragraph{\bf Case 2}
When the number of surviving firms remains constant, we cannot estimate the firms' output and cost using the aforementioned averaging method. Instead, we represent each firm individually with a column in the following matrix:
\begin{align*}
\begin{bmatrix}
    a_{1,1}&a_{1,2}&\cdots&a_{1,n}\\
    a_{2,1}&a_{2,2}&\cdots&a_{2,n}\\
    a_{3,1}&a_{3,2}&\cdots&a_{3,n}\\
    \vdots & \vdots  & & \vdots   \\
    a_{t,1}&a_{t,2}&\cdots&a_{t,n}\\
\end{bmatrix},    
\end{align*}
where $a_{i,j}>0$, $k_j=1$, $\forall i \in [n]$ and $\forall i \in [t]$

%We still use the update formula to represent the first two columns and subtract the two expressions of the same row, obtaining the following results:
We continue by applying the updating formula to the first two columns and then subtracting the corresponding elements in each row, yielding the following results:
\begin{align}
a_{1,1}-a_{1,2}&=\frac{a_{t,1}-a_{t,2}+c_2-c_1}{2}\nonumber ,\\
a_{2,1}-a_{2,2}&=\frac{a_{1,1}-a_{1,2}+c_2-c_1}{2}\nonumber ,\\
&\cdots\nonumber \\
a_{t,1}-a_{t,2}&=\frac{a_{t-1,1}-a_{t-1,2}+c_2-c_1}{2}\nonumber .
\end{align}
Simplifying, we get:
\begin{align}
a_{1,1}-a_{1,2}=a_{2,1}-a_{2,2}=\cdots=a_{t,1}-a_{t,2}=c_2-c_1\label{eq:18}.
\end{align}

For the first column of the matrix, without loss of generality, let $a_{1,1}\leq a_{i,1}$,  where $i\in [t]$.

According to Eq.~\eqref{eq:18}, it follows that $a_{1,j}\leq a_{i,j}$, where $i\in [t]$ and $j\in [n]$.

Applying a similar approach as before, we can establish a partial order relationship between the first row and the remaining rows:
\begin{align}
[a_{1,1}\ a_{1,2}\ a_{1,3}\ \dots\ a_{1,n}]\leq[a_{i,1}\ a_{i,2}\ a_{i,3}\ \dots\ a_{i,n}]\label{eq:19}.
\end{align}

From Eq.~\eqref{eq:19} and Eq.~\eqref{q_i^{t+1}}, we can deduce that the second row has a partial order relationship with the other rows:
\begin{align}
[a_{2,1}\ a_{2,2}\ a_{2,3}\ \dots\ a_{2,n}]\ge[a_{i,1}\ a_{i,2}\ a_{i,3}\ \dots\ a_{i,n}]\label{eq:20}.
\end{align}

Similarly, using to Eq.~\eqref{eq:20} and Eq.~\eqref{q_i^{t+1}}, we can establish that the third row has a partial order relationship with the other rows: 
\begin{align}
[a_{3,1}\ a_{3,2}\ a_{3,3}\ \dots\ a_{3,n}]\leq[a_{i,1}\ a_{i,2}\ a_{i,3}\ \dots\ a_{i,n}]\label{eq:21}.
\end{align}

Combining Eq.~\eqref{eq:19} and Eq.~\eqref{eq:21}, we find that the first row is equal to the third row:
\begin{align}
[a_{1,1}\ 0\ 0\ \dots\ 0]=[a_{3,1}\ a_{3,2}\ a_{3,3}\ \dots\ a_{3,n}]\label{eq:22}.
\end{align}

Therefore, in this case, when the number of surviving firms remains unchanged, only a two-cycle oscillation exists.  
\end{proof}

Next, we study the forms for the two-period oscillation. We first classify by whether the number of companies producing products in both periods, $k_1$, is zero or not. For \( k_1 > 0 \), we further classify the forms based on the number of companies producing products only in one of the two periods. We investigate the time complexity of verifying whether a given form of oscillation exists and the time complexity of finding an oscillation in that form if it exists. 

\iffalse
For $k_1 = 0$, we calculate its equilibrium solution and explore the time complexity of verifying whether a given form of solution achieves the equilibrium represented by Case 1. For \( k_1 > 0 \), we further classify the forms based on the number of companies producing products only in one of the two periods, $k_2$. For \( k_2 > 0 \), we study the impact of \( k_2 \) on \( k_1 \) and find that in this case, \( k_1 \) can only be 1 or 2. We define this as Case 2 and investigate the specific values of the equilibrium solution if such a solution exists. We also discuss the time complexity of verifying whether a given form of solution exists. Finally, for \( k_1 > 0 \) and \( k_2 = 0 \), based on the analysis of Case 2, it is known that this situation is equivalent to \( k_1 \geq 3 \). Therefore, we define this as Case 3. Through our research, we find that this type of equilibrium solution is not unique. We explore the relationships between the production quantities and provide a normative representation as much as possible. As with the previous two cases, we also investigate the time complexity of verifying whether a given solution achieves equilibrium. 
\fi

\begin{theorem}
\label{thm:format}
Let  $k_1$ denote the number of firms with consistently positive production during a two-period stable oscillation. Depending on the value of $k_1$, the system exhibits three distinct equilibrium types:
\begin{enumerate}
    \item \textbf{Case  $k_1 = 0$:} In one of the two periods, all firms have zero production. 
    \begin{itemize}
        \item Equilibrium existence can be verified in  $O(n)$ time.
        \item The unique equilibrium can be computed in $O(n)$ time.
    \end{itemize}
    
    \item \textbf{Case $0 < k_1 < 3$:} Exactly $k_1$ firms maintain positive production in both periods, while all others have zero production in one period.
    \begin{itemize}
        \item Oscillation existence can be verified in $O(n)$ time.
        \item Infinitely many oscillations may exist.
        \item A solution can be find in $O(n)$ time.
    \end{itemize}
    
    \item \textbf{Case  $k_1 \geq 3$ :} Stable oscillations emerge among three or more firms.
    \begin{itemize}
        \item Oscillation existence can be verified in $O(1)$ time.
        \item Infinitely many oscillations may exist.
        \item A solution can be find in $O(1)$ time.
    \end{itemize}
\end{enumerate}
\end{theorem}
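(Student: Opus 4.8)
The plan is to combine the monotonicity established in Lemma~\ref{lemma1} with a single aggregate identity governing the firms that survive in both periods. First I would normalize the two-period quantity matrix so that its rows $\bm{q}^1,\bm{q}^2$ are mutual best responses with $\bm{q}^1\ne\bm{q}^2$ (a genuine period-two cycle by Theorem~\ref{theorem2}). By Lemma~\ref{lemma1}, once the dynamics settle the quantities in each period are non-increasing in the firm index, so the survivors in each period form a prefix $\{1,\dots,m_1\}$ and $\{1,\dots,m_2\}$; without loss of generality $m_1\le m_2$. Hence the firms producing positively in both periods are exactly $\{1,\dots,m_1\}$, giving $k_1=\min\{m_1,m_2\}=m_1$, and the entire classification reduces to controlling the pair of survivor counts.

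The engine of the proof is the behaviour of the always-on firms. For each $j\le k_1$ the two best-response equalities (no $\max$ binding) read $a_{2,j}=\tfrac12(A-T_1+a_{1,j}-c_j)$ and $a_{1,j}=\tfrac12(A-T_2+a_{2,j}-c_j)$, where $T_1,T_2$ are the period totals; subtracting shows $a_{2,j}-a_{1,j}=\delta:=\tfrac13(T_2-T_1)$ is the \emph{same} constant for every always-on firm. Summing the increments over all firms yields $T_2-T_1=k_1\delta+P$, where $P\ge0$ is the total second-period output of firms active only in period~2; together with $3\delta=T_2-T_1$ this gives the balance $(3-k_1)\,\delta=P$, which drives the trichotomy.

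I would then split on the survivor counts. If $k_1=0$ then $m_1=0$, so one period is identically zero and the other is the unique best response $\tfrac12\max\{0,A-c_j\}$ to it; checking that best-responding again returns the zero vector is a single monotone inequality on the cheapest firm, giving the $O(n)$ computation and verification. If $m_1<m_2$ (so $P>0$) I would pin down the sign of $\delta$ by isolating one period-2-only firm: its activity in period~2 forces $A-T_1-c_j>0$, its inactivity in period~1 forces $A-T_2+a_{2,j}-c_j\le0$, and chaining these yields $0<a_{2,j}<T_2-T_1=3\delta$, hence $\delta>0$; the balance $(3-k_1)\delta=P>0$ then forces $k_1<3$, i.e. $k_1\in\{1,2\}$. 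Finally if $m_1=m_2=k_1$ (so $P=0$) the balance reads $(3-k_1)\delta=0$, so a genuine oscillation ($\delta\ne0$) is possible only at $k_1=3$; this simultaneously rules out $k_1>3$ and exhibits $\delta$ as a free amplitude parameter, giving the continuum of solutions in Case~3.

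For the running times I would precompute cost prefix sums so that, for any fixed pair $(m_1,m_2)$, the aggregate system in $(T_1,\delta)$ is solved in $O(1)$ and each feasibility check—positivity of the marginal survivors (firms $m_1,m_2$) and inactivity of the boundary firms (firms $m_1+1,m_2+1$), all reduced to one firm by monotonicity—is $O(1)$. Case~2 then scans $m_1\in\{1,2\}$ and $m_2\le n$ in $O(n)$ total, with the ``infinitely many'' phenomenon appearing exactly when the $(m_1,m_2)$-system degenerates and leaves a coordinate free; Case~3 involves only firms $1,2,3$ and the cost $c_4$, so it is $O(1)$. I expect the main obstacle to be the genuine-oscillation step that fixes the sign of $\delta$ and thereby caps $k_1$ at $2$ once an extra firm appears—this inequality is precisely what separates Case~2 from Case~3 and makes the classification exhaustive, whereas the remaining parts are essentially bookkeeping over prefixes.
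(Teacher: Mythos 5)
Your proposal is correct, and it reaches the trichotomy by a genuinely different route than the paper. Where the paper handles the case $k_1>0,\ k_2>0$ by replacing each block of simultaneously active firms with identical ``average'' firms (the Lemma~\ref{lem:same_quantity} device), reducing to the $2\times 2$ matrix $\bigl[\begin{smallmatrix} a & 0 \\ b & c \end{smallmatrix}\bigr]$ and extracting the inequalities $a\le y$ and $(k_1-3)a>(k_1-3)y$ to force $k_1<3$, you work directly with the heterogeneous firms and derive everything from one aggregate identity: every always-on firm satisfies $a_{2,j}-a_{1,j}=\delta=\tfrac13(T_2-T_1)$, and summing all increments gives $(3-k_1)\delta=P$. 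Your sign argument $\delta>0$ (obtained by chaining one period-2-only firm's activity and inactivity constraints) then yields $k_1<3$ whenever $P>0$, while $P=0$ forces $k_1=3$ for a genuine cycle---so a single identity simultaneously proves the paper's Case 2 bound and its Case 3 rigidity; the paper proves the latter separately via its Eq.~\eqref{theorem3_eq10}, which is exactly your identity restricted to $k_2=0$. Your route buys three things: it avoids the averaging reduction entirely (no need to re-justify that the replacement preserves best responses), it unifies the two structural cases, and the parametrization by $(T_1,\delta)$ with cost prefix sums gives a cleaner complexity story, since each candidate $(k_1,m_2)$ is solved and feasibility-checked in $O(1)$ via the boundary firms, making the scan over all $m_2$ genuinely $O(n)$ rather than $O(n)$ per candidate. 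What the paper's computation buys instead is explicitness: closed-form solutions for $k_1\in\{1,2\}$, the concrete identification of the degenerate case $k_1=1,\ k_2=4$ where infinitely many cycles appear, and the explicit feasibility window for $\Delta a$ in Case 3 involving $c_4$. In your framework these appear only abstractly (singularity of the small linear system, and $O(1)$ boundary checks), so a full writeup should exhibit the degenerate instance and the $\delta$-interval explicitly to recover every claim in the statement; you should also spell out the small step that Lemma~\ref{lemma1}'s eventual monotonicity, applied to a periodic trajectory, gives the prefix structure at every point of the cycle.
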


\begin{proof}
We assume that there are $k$ firms whose production could be positive when oscillations occur, and the production costs of the firms are  $c_1, \ldots, c_n$. We denote by $A$ the market capacity. %This setup defines three cases of two-period oscillations. 
Next, we will separately discuss the constraints on $k$ for these cases, as well as the time complexity for checking the existence of oscillations and finding the oscillation solutions.

\paragraph{\bf When $k_1=0$} This corresponds to Case 1 where no firm survives in the first period, and there are $k$ survival firms in the second period. It can be intuitively represented using the quantity matrix as follows: 
\begin{align*}
\begin{bmatrix}
0&0&\cdots&0\\
a_{2,1}&a_{2,2}&\cdots&a_{2,k}\\
\end{bmatrix}.
\end{align*}
If the equilibrium exists, the value of $k$ can be determined in $O(n)$ time. Since $a_{2,k}$ is the last positive number in the second period, the following inequalities can be established:
\begin{align}
a_{2,k}=\frac{A-c_k}{2}&>0,\label{Theorem2-case1--1}\\
\frac{A-c_{k+1}}{2}&\leq0.\label{Theorem2-case1-0}
\end{align}
By the update formula, the oscillation solution can be calculated in $O(n)$ time:
\begin{align}
a_{2,i}=\frac{A-c_i}{2}.\label{Theorem2-case1-1}
\end{align}
Since no firm survives in the first stage (i.e., \( k_1 = 0 \)), according to the update formula, we can list the following relationship for the first firm in the first period:
\begin{align}
a_{1,1}=\frac{A-\sum_{i=1}^{k} a_{2,i}-c_i+a_{2,1}}{2}\leq 0.\label{Theorem2-case1-2}
\end{align}
By substitute Eq.~\eqref{Theorem2-case1-1} into Eq.~\eqref{Theorem2-case1-2}, we have
\begin{align}
(k-3)A+3c_1-\sum_{i=1}^{k}c_i\ge 0.\label{Theorem2-case1-3}
\end{align}
Therefore, the solution that satisfies Case 1 can be verified in $O(n)$ time complexity according to Eq.~\eqref{Theorem2-case1--1}, Eq.~\eqref{Theorem2-case1-0} and Eq.~\eqref{Theorem2-case1-3}.
    
% Due to the existence of a row entirely composed of zeros ($a_{1,i}=0$), We can list the necessary relational conditions based on the zero in the first row and first column as follows:
% $$$$

\paragraph{\bf When $k_1>0$ and $k_2>0$} W.l.o.g., we assume there are $k_1>0$ survival firms in the first period, and $k_1 + k_2=k$ survival firms in the second period, where $k_2>0$. It can be intuitively represented using the quantity matrix as follows: 

\begin{align*}
    \begin{bmatrix}
    a_{1,1}&a_{1,2}&\cdots&a_{1,k_1}&0&0&\cdots&0\\
    a_{2,1}&a_{2,2}&\cdots&a_{2,k_1}&a_{2,k_1+1}&a_{2,k_1+2}&\cdots&a_{2,k_1+k_2}\\
    \end{bmatrix}.
\end{align*}
Similarly to the arguments in Lemma~\ref{lem:same_quantity}, we can replace the firms simultaneously supplying positive quantities with a set of equivalent firms. In particular, we assume that in the first period, $k_1$ firms have the average production quantity of $a$ and $k_2$ firms have a production quantity of 0. In the second period, $k_1$ firms have the average production quantity of $b$, and $k_2$ firms have a average production quantity of $c$. The average production cost per product for the $k_1$ firms are $x$, for the $k_2$ firms' average production cost are $y$, and it is easy to see that $x<y$. The matrix representation can be simplified to:

\begin{align*}
    \begin{bmatrix}
    a&0\\
    b&c\\
    \end{bmatrix},
\end{align*}
where $a>0, b>0$ and $c>0$. Since the number of firms that consistently survive in the original dynamic is the same as that in the new dynamic, we will now calculate the range of firms that consistently survive in the new dynamic.

According to the update formula, we have
\begin{align}
a=&\frac{A-\left(k_1-1\right) b-k_2 c-x}{2}, \label{theorem3_eq1}\\
b=&\frac{A-\left(k_1-1\right) a-x}{2}, \label{theorem3_eq2}\\
c=&\frac{A-k_1 a-y}{2}, \label{theorem3_eq3}\\
0 \geqslant& \frac{A-k_1 b-\left(k_2-1\right) c-y}{2}.\label{theorem3_eq4}
\end{align}

If we update certain parameters to the following form:
$A^{\prime}=A-x>0$, $y^{\prime}=y-x>0$ and $x^{\prime}=0$, we have Eqs.~\eqref{theorem3_eq1} to \eqref{theorem3_eq4} still hold. Therefore, we assume $x=0$ without loss of generality.
Then, by substituting Eqs.~\eqref{theorem3_eq2} and \eqref{theorem3_eq3} into Eq.~\eqref{theorem3_eq1}, we obtain
\begin{align}
% &a  =\frac{A-\left(k_1-1\right) \cdot \frac{A-\left(k_1-1\right) a}{2}-k_2 \cdot \frac{A-k_1 a-y}{2}}{2}, \notag\\
&\left(k_1+k_2-3\right) A =\left(k_1^2-2 k_1-3+k_1 k_2\right) a+k_2 y.\label{theorem3_eq5}
\end{align}

Eq. \eqref{theorem3_eq3}  can be deformed to obtain 
\begin{align}
    A>k_1 a+y. \label{theorem3_eq6}
\end{align}

Pluging Eq. \eqref{theorem3_eq2} and Eq. \eqref{theorem3_eq3} into Eq. \eqref{theorem3_eq4} yields
\begin{align}
% & A-y-\frac{k_1}{2}\left(A-\left(k_1-1\right) a\right)-\frac{k_2-1}{2}\left(A-k_1 a-y\right) \leqslant 0 \notag \\ 
&\left(k_2-3\right) y+k_1\left(k_1+k_2-2\right) a \leq\left(k_1+k_2-3\right) A. \label{theorem3_eq7}
\end{align}

Combining Eq.~\eqref{theorem3_eq5} with Eq.~\eqref{theorem3_eq7}, we obtain:
\begin{align*}
     a \leq y.
\end{align*}

Combining Eq.~\eqref{theorem3_eq5} with Eq.~\eqref{theorem3_eq6}, we obtain:
\begin{align*}
     (k_1-3)a > (k_1-3)y.
\end{align*}

Therefore, in this case, the number of firms that consistently survive in the new dynamic is less than 3. Since the number of firms that consistently survive in the original dynamic is the same as in the new dynamic, \( k_1 \) in the original dynamic must also be less than 3.

\paragraph{\bf When $0<k_1<3$ and $k_2>0$} It corresponds to Case 2. 
%Therefore, when $k_1<3$, the final oscillation form may appear as represented in case 2. %For $k_1 \geq 3$, we discuss this in case 3. 
The number of firms with consistently positive production, $k_1$, is 1 or 2. It can be intuitively represented using matrices as follows: 
\begin{align*}
\begin{bmatrix}
a_{1,1}&0&0&\cdots&0\\
a_{2,1}&a_{2,2}&a_{2,3}&\cdots&a_{2,k}\\
\end{bmatrix} \text{~~and~~}
\begin{bmatrix}
a_{1,1}&a_{1,2}&0&0&\cdots&0\\
a_{2,1}&a_{2,2}&a_{2,3}&a_{2,4}&\cdots&a_{2,k}
\end{bmatrix}.
\end{align*}
% Next, we will prove that for each value of $k_1$, there is at most one oscillation solution.
Next, we calculate the specific form of the final solution if an equilibrium represented by the case exists. It can be expressed in a general matrix form as follows:
\begin{align*}
\begin{bmatrix}
a_{1,1}&a_{1,2}&\cdots&a_{1,k_1}&0&0&\cdots&0\\
a_{2,1}&a_{2,2}&\cdots&a_{2,k_1}&a_{2,k_1+1}&a_{2,k_1+2}&\cdots&a_{2,k_1+k_2}\\
\end{bmatrix}.
\end{align*}

According to the update formula, the expression for the second stage can be written as:
% \begin{align}
% & \left\{\begin{array}{l}
% a_{2,1}=\frac{1}{2}(A-\sum_{i=1}^{k_1} a_{1,i}-c_{1}+a_{1,1})\\
% a_{2,2}=\frac{1}{2}(A-\sum_{i=1}^{k_1} a_{1,i}-c_{2}+a_{1,2}) \\
% \cdots  \\
% a_{2,k_{1}}=\frac{1}{2}(A-\sum_{i=1}^{k_1} a_{1,i}-c_{k_{1}}+a_{1,k_{1}}) \\
% \end{array}\right. \label{Theorem3-case2-32}\\
% & \left\{\begin{array}{l}
% a_{2,k_{1}+1}=\frac{1}{2}\left(A-\sum_{i=1}^{k_1} a_{1,i}-c_{k_1+1}\right) \\
% a_{2,k_{1}+2}=\frac{1}{2}\left(A-\sum_{i=1}^{k_1} a_{1,i}-c_{k_2+1}\right) \\
% \cdots \\
% a_{2,k_{1}+k_{2}}=\frac{1}{2}\left(A-\sum_{i=1}^{k_1} a_{1,i}-c_{k_1+k_2}\right)
% \end{array}\right. \label{Theorem3-case2-33}
% \end{align}
\begin{align}
a_{2,l}&=\frac{1}{2}\left(A-\sum_{i=1}^{k_1} a_{1,i}-c_l+a_{1,l}\right), \quad \forall 1\leq l\leq k_1, \label{Theorem3-case2-32} \\
a_{2,k_{1}+r}&=\frac{1}{2}\left(A-\sum_{i=1}^{k_1} a_{1,i}-c_{k_1+r}\right), \quad \forall 1\leq r\leq k_2. \label{Theorem3-case2-33}
\end{align}

Summing both sides of the above equation, we obtain:
\begin{align}
\sum_{i=1}^{k_1} a_{2,i}&=\frac{1}{2}\left(k_1 A-k_1 \sum_{i=1}^{k_1} a_{1,i}-\sum_{i=1}^{k_1} c_i+\sum_{i=1}^{k_1} a_{1,i}\right),\label{Theorem3-case2-34}\\
\sum_{j=k_1+1}^{k_1+k_2} a_{2,j}&=\frac{1}{2}\left(k_2 A-k_2 \sum_{i=1}^{k_1} a_{1,i}-\sum_{j=k_1}^{k_1+k_2} c_j\right).\label{Theorem3-case2-35}
\end{align}

Since the system satisfies a two-period oscillation, the quantity in the first stage can be calculated using the quantity in the second stage. This can be expressed using the update formula as follows:
% \begin{align}
% & \left\{\begin{array}{l}
% a_{1,1}=\frac{1}{2}\left(A-\sum_{i=1}^{k_1} a_{2,i}-\sum_{j=k_1+1}^{k_1+k_2} a_{2,j}-c_1+a_{2,1}\right)\\
% a_{1,2}=\frac{1}{2}\left(A-\sum_{i=1}^{k_1} a_{2,i}-\sum_{j=k_1+1}^{k_1+k_2} a_{2,j}-c_2+a_{2,2}\right)\\
% \cdots\\
% a_{1,k_1}=\frac{1}{2}\left(A-\sum_{i=1}^{k_1} a_{2,i}-\sum_{j=k_1+1}^{k_1+k_2} a_{2,j}-c_{k_1}+a_{2,k_1}\right)\\ 
% \end{array}\right.\label{Theorem3-case2-36} 
% \end{align}
\begin{align}
a_{1,l}=\frac{1}{2}\left(A-\sum_{i=1}^{k_1} a_{2,i}-\sum_{j=k_1+1}^{k_1+k_2} a_{2,j}-c_l+a_{2,l}\right), \quad \forall 1\leq l\leq k_1. \label{Theorem3-case2-36} 
\end{align}

We substitute Eq.~\eqref{Theorem3-case2-32} to Eq.~\eqref{Theorem3-case2-35}  into Eq.~\eqref{Theorem3-case2-36}  and after simplification we have:
% \begin{align}
% & \left\{\begin{array}{l}
% 3 a_{1,1}=\left(3-k_1-k_2\right) A+\left(k_1+k_2-2\right) \sum_{i=1}^{k_1} a_{1,i}+\left(\sum_{i=1}^{k_1} c_i+\sum_{j=k_1+1}^{k_1+k_2} c_j\right)-3 c_1 \\
% 3 a_{1,2}=\left(3-k_1-k_2\right) A+\left(k_1+k_2-2\right) \sum_{i=1}^{k_1} a_{1,i}+\left(\sum_{i=1}^{k_1} c_i+\sum_{j=k_1+1}^{k_1+k_2} c_j\right)-3 c_2\\
% \cdots\\
% 3 a_{1,k_1}=\left(3-k_1-k_2\right) A+\left(k_1+k_2-2\right) \sum_{i=1}^{k_1} a_{1,i}+\left(\sum_{i=1}^{k_1} c_i+\sum_{j=k_1+1}^{k_1+k_2} c_j\right)-3 c_{k_1}
% \end{array}\right. \label{Theorem3-case2-37}
% \end{align}
\begin{align}
3 a_{1,l}=&\left(3-k_1-k_2\right) A+\left(k_1+k_2-2\right) \sum_{i=1}^{k_1} a_{1,i}\notag\\ &+\left(\sum_{i=1}^{k_1} c_i+\sum_{j=k_1+1}^{k_1+k_2} c_j\right)-3 c_l, \quad \forall 1\leq l\leq k_1. \label{Theorem3-case2-37}
\end{align}

By summing up Eq.~\eqref{Theorem3-case2-37}, we have
\begin{align}
3 \sum_{i=1}^{k_1} a_{1,i}=&\left(3-k_1-k_2\right) k_1 A+ ( k_1+k_2-2) k_1 \sum_{i=1}^{k_1} a_{1,i}\notag\\
&+k_1\left(\sum_{i=1}^{k_1} c_i+\sum_{j=k_1+1}^{k_1+k_2} c_j\right)-3 \sum_{i=1}^{k_1} c_i.\label{Theorem2-case2-101}
\end{align}

If there exists an oscillation with  $k_1 = 1$, by iterating over the values of $k_2$ and using Eq.~\eqref{Theorem2-case2-101}, we can calculate the following equilibrium solution:

If $k_2 \neq 4$ we have:
\begin{align*}
a_{1,1}&=\frac{(2-k_2)A+\sum_{j=2}^{1+k_2} c_j-2c_1}{4-k_2},\\
a_{2,1}&=\frac{A-c_1}{2},\\
a_{1,j}&=0, \quad  \forall 2\leq j\leq 1+k_2,\\
a_{2,j}&=\frac{A-a_{1,1}-c_j}{2}, \quad  \forall 2\leq j\leq 1+k_2.
\end{align*}

If $k_2 = 4$, since the coefficient of $a_{1,1}$ is zero, an infinite number of oscillation exist. Based on the oscillation form, the range can be calculated and represented as follows:
\begin{align*}
A-c_6\leq &a_{1,1}<A-c_5,\\
&a_{2,1}=\frac{A-c_1}{2},\\
&a_{1,j}=0,\quad  \forall 2\leq j\leq 1+k_2,\\
&a_{2,j}=\frac{A-a_{1,1}-c_j}{2}, \quad  \forall 2\leq j\leq 1+k_2.
\end{align*}

If there exists an oscillation with  $k_1 = 2$, by iterating over the values of $k_2$ and using equation Eq.~\eqref{Theorem2-case2-101}, we can calculate the following equilibrium solution:
\begin{align*}
a_{1,1}&=\frac{(1-k_2)A+k_2\frac{2(1-k_2)A+2\sum_{j=3}^{2+k_2}c_j-(c_1+c_2)}{3-2k_2}+\sum_{j=3}^{2+k_2}c_j+c_2-2c_1}{3},\\
a_{1,2}&=\frac{(1-k_2)A+k_2\frac{2(1-k_2)A+2\sum_{j=3}^{2+k_2}c_j-(c_1+c_2)}{3-2k_2}+\sum_{j=3}^{2+k_2}c_j+c_1-2c_2}{3},\\
a_{2,1}&=\frac{A-a_{1,2}-c_1}{2},\\
a_{2,2}&=\frac{A-a_{1,1}-c_2}{2},\\
a_{1,j}&=0,\quad  \forall 3\leq j\leq 2+k_2,\\
a_{2,j}&=\frac{A-a_{1,1}-a_{1,2}-c_j}{2},\quad  \forall 3\leq j\leq 2+k_2.
\end{align*}
Therefore, the oscillation solution can be calculated in $O(n)$ time. According to the update formula, we can verify whether this equilibrium holds in $O(n)$ time.

\paragraph{\bf{When $k_1> 0$ and $k_2=0$}} It corresponds to Case 3 where the number of survival firms unchanged. It can be intuitively represented as follows: 
    \begin{align*}
    \begin{bmatrix}
    a_{1,1}&a_{1,2}&\cdots&a_{1,k}\\
    a_{2,1}&a_{2,2}&\cdots&a_{2,k}\\
    \end{bmatrix}.
    \end{align*}

We derive the following results:
\begin{align}
& a_{1,i}=\frac{A-\sum_{j \neq i} a_{2,j}-c_i}{2}, \label{theorem3_eq8}\\
& a_{2,i}=\frac{A-\sum_{j \neq i} a_{1,j}-c_i}{2}, \label{theorem3_eq9}
\end{align}

Subtracting Eq.~\eqref{theorem3_eq9} from Eq.~\eqref{theorem3_eq8} and simplifying, we obtain:
\begin{align}
& 3\left(a_{1,i}-a_{2,i}\right)=\sum_{j=1}^k a_{1,j}-\sum_{j=1}^k a_{2,j}, \quad \forall i. \label{theorem3_eq10}
\end{align}

Hence we have that $a_{1,i}-a_{2,i}$ is a constant and $k=3$. At this point, the quantity matrix can be simplified as follows:
\begin{align*}
\begin{bmatrix}
a_{1,1}&a_{1,2}&a_{1,3}\\
a_{2,1}&a_{2,2}&a_{2,3}
\end{bmatrix}.
\end{align*}

We denote $\Delta a=a_{2,i}-a_{1, i}$ for any $i\in[n]$ and WLOG assume that $\Delta a>0$.
By the update formula and simplifications, we get
\begin{comment}
\begin{align*}
&a_{1,1}=\frac{1}{2}\left(A-a_{1,2}-a_{1,3}-2 \Delta a-c_1\right)\\
&a_{1,1}+c_1=a_{1,2}+c_2=a_{1,3}+c_3
\end{align*}
Simplify to get
\end{comment}
\begin{align}
a_{1,1}&=\frac{A-a_{1,2}-a_{1,3}-2\Delta a -c_1}{2},\notag\\
a_{1,1}+c_1&=A-(a_{1,1}+a_{1,2}+a_{1,3})-2\Delta a.\label{Theorem2-case3-1}
\end{align}
Similarly, we can derive the expressions for $a_{1,2}$ and $a_{1,3}$ and observe the following pattern:
\begin{align}
a_{1,1}+c_1=a_{1,2}+c_2=a_{1,3}+c_3.\label{Theorem2-case3-2}
\end{align}
According to Eq.~\eqref{Theorem2-case3-2}, we rewrite and simplify Eq.~\eqref{Theorem2-case3-1} as follows:

\begin{align*}
a_{1,1}+c_1&=A-\left(a_{1,1}+a_{1,1}+c_1-c_2+a_{1,1}+c_1-c_3\right)-2\Delta a,\\
a_{1,1}&=\frac{1}{4}\left(A+c_2+c_3-3 c_1\right)-\frac{1}{2} \Delta a.
\end{align*}
Similarly, we can obtain the followings:
\begin{align*}
& a_{1,1}=\frac{1}{4}\left(A+c_2+c_3-3 c_1\right)-\frac{1}{2} \Delta a, \\
& a_{1,2}=\frac{1}{4}\left(A+c_1+c_3-3 c_2\right)-\frac{1}{2} \Delta a, \\
& a_{1,3}=\frac{1}{4}\left(A+c_1+c_2-3 c_3\right)-\frac{1}{2} \Delta a, \\
& a_{2,1}=\frac{1}{4}\left(A+c_2+c_3-3 c_1\right)+\frac{1}{2} \Delta a,  \\
& a_{2,2}=\frac{1}{4}\left(A+c_1+c_3-3 c_2\right)+\frac{1}{2} \Delta a,\\
& a_{2,3}=\frac{1}{4}\left(A+c_1+c_2-3 c_3\right)+\frac{1}{2} \Delta a.
\end{align*}
Since $a_{1,3}$ and $a_{2,3}$ must be non-negative and the fourth firm does not produce any product, we have
\begin{align*}
a_{1,3}=\frac{1}{4}\left(A+c_1+c_2-3 c_3\right)-\frac{1}{2} \Delta a&>0,\\
\frac{A-a_{1,1}-a_{1,2}-a_{1,3}-c_4}{2}&\leq0.
\end{align*}
We simplify the above and get
\begin{align*}
& \left\{\begin{array}{l}
\Delta a<\frac{A+c_1+c_2-3 c_3}{2},\\
\Delta a\leq \frac{2}{3}c_4-\frac{A+c_1+c_2+c_3}{6}.
\end{array}\right.
\end{align*}
Thus, we have found the range of the production difference for the same company over two periods.
\end{proof}

\section{Conclusion}
%Based on the extensive exploration of Cournot competition dynamics within oligopoly games, this study has advanced our understanding of strategic interactions among firms producing homogeneous products. 
In this paper, we characterized specific patterns in the oscillations of production quantities, where firms' production decisions alternate between two distinct periods. This behavior arises due to the complexity of firms' best response strategies, especially when ensuring they produce non-negative quantities. 
We consider all three specific types of two-period oscillations. Each type can be found efficiently, taking linear time in the number of firms involved. This classification enhances our understanding of strategic interactions among firms producing homogeneous products, providing insights into the intricate dynamics of markets with multiple competitors. Furthermore, our techniques are robust and can be extended to other settings. In particular, Theorem~\ref{theorem2} still holds when firms use the weighted average between the best response and the strategy in the previous round. 
 
For the future work, it would be interesting to examine the convergence rate to reach an oscillation and the stability of different forms of oscillations. Besides, how to extend our results to Bertrand model~\cite{bertrand1883review} could be a challenging problem. In Bertrand model, each firm competes by setting prices instead of quantities.

\section*{Acknowledgments}
This work is supported by the National Natural Science Foundation of China (Nos. 62472029 and 62172422) and the Key Laboratory of Interdisciplinary Research of Computation and Economics (Shanghai University of Finance and Economics), Ministry of Education. 
%Zihe Wang was partially supported by the National Natural Science Foundation of China (Grant No. 62172422).

%
% ---- Bibliography ----
%
% BibTeX users should specify bibliography style 'splncs04'.
% References will then be sorted and formatted in the correct style.
%
\newpage

\bibliographystyle{plain}
\bibliography{ref}

\end{document}